\definecolor {infocolor} {rgb} {0.6,0.6,0.6}
\definecolor {sepia} {rgb} {0.6,0.2,0.1}
\newcommand {\mathset} [1] {\ensuremath {\mathbb {#1}}}
\newcommand {\R} {\mathset {R}}
\newcommand {\etal} {\textit {et al.}}
\newtheorem {theorem} {Theorem}
\newtheorem {lemma}[theorem] {Lemma}
\newtheorem {observation}[theorem] {Observation}
\newtheorem {claim}[theorem] {Claim}
\theoremstyle {definition}
\newtheorem* {remark} {Remark}
\newcommand{\afunc}{\textsc}
\title{How Many Potatoes are in a Mesh?}
\author
{
  Marc van Kreveld\thanks
  {
Department of Information and Computing Sciences, Utrecht University, the Netherlands.
{\tt m.j.vankreveld@uu.nl, m.loffler@uu.nl}.
  }
  \and Maarten L\"offler\footnotemark[1]
  \and J\'anos Pach\thanks
  {
  Ecole Polytechnique F\'ederale de Lausanne and R\'enyi Institute, Budapest. 
{\tt pach@cims.nyu.edu}.
 }
}
\begin{document}
\maketitle

\begin{abstract}
We consider the combinatorial question of how many convex polygons
can be made by using the edges taken from a fixed triangulation
of $n$ vertices.
For general triangulations, there can be exponentially many:
we show a construction that has $\Omega(1.5028^n)$ convex polygons,
and prove an $O(1.62^n)$ upper bound in the worst case. 
If the triangulation is fat (every triangle has its angles lower-bounded 
by a constant $\delta>0$), then there can be only
polynomially many: $\Omega(n^{\frac12\lfloor\frac{2\pi}\delta\rfloor})$ and $O(n^{\lceil\frac\pi\delta\rceil})$.

We also consider the problem of counting convex outerplanar polygons 
(i.e., they contain no vertices of the triangulation in their interiors)
in the same triangulations. In this setting, we get 
the same exponential bounds in general triangulations, 
$\Omega(n^{\lfloor\frac{2\pi}{3\delta}\rfloor})$ and
$O(n^{\lfloor\frac{2\pi}{3\delta}\rfloor})$ in fat triangulations.
If the triangulation is furthermore compact (the ratio between the longest and shortest distance between any two vertices is bounded), the bounds drop further to $\Theta (n^2)$ for general convex outerplanar polygons, and $\Theta (n)$ for fat convex outerplanar polygons.
\end{abstract}

\section{Introduction}

It is a common task in combinatorial geometry to
give lower and upper bounds for the number of occurrences of
a certain subconfiguration in a geometric structure.
Well-known examples are the number of vertices in the lower envelope
or single face in an arrangement of line segments, the number of 
triangulations that have a given set of points as their vertices,
etc.~\cite{ps}. In graph drawing, counts of substructures of
graphs are also commonly studied. The crossing number is the best known
example; number of orientations of edges and page numbers are other examples.

In this paper we analyze how many convex polygons (potatoes)
can be constructed by taking unions of triangles taken from a fixed
triangulation (mesh) $M$ with $n$ vertices. Equivalently, we analyze
how many convex polygon boundaries can be made using the edges of 
a fixed triangulation, see Figure~\ref{fig:intro-example}.
For general triangulations there can be exponentially many. 
However, the lower-bound examples use many triangles with very small angles. 
When $n\rightarrow\infty$, the smallest angles tend to zero. To understand
if this is necessary, we also study the number of convex polygons in a triangulation
where all angles are bounded from below by a a fixed constant. 
It turns out that the number of convex polygons is polynomial in this case.

\eenplaatje {intro-example} {A mesh $M$. Three convex polygons that respect $M$ are marked.}

We also study the same questions when the convex polygon cannot have
vertices of $M$ interior to it (carrot). This is the same as requiring 
that the submesh bounded by the convex polygon is outerplanar.

\subsection{Related work}

This paper is motivated by the \emph {potato peeling problem}:
Find a maximum area convex polygon whose vertices
and edges are taken from the triangulation of a given point
set~\cite{akls} or a given polygon~\cite {cy-psppp-86,g-lcpcn-81}.

In computational geometry, \emph{realistic input models} have received considerable
attention in the last decades. By making assumptions on the input, many
computational problems can be solved provably faster than what is possible
without these assumptions. One of the early examples concerned fat triangles:
a triangle is $\delta$-fat if each of its angles is at least $\delta$, for
some fixed constant $\delta>0$. 
Matousek~\etal~\cite{mpssw} show that the union of
$n$ $\delta$-fat triangles has complexity $O(n\log \log n)$ while for $n$ 
general triangles this is $\Omega(n^2)$. As a consequence, the union
of fat triangles can be computed more efficiently as well. 

In~\cite{abt,bchlt,bsvk,mks}, \emph{fat triangulations} were used as a realistic
input model motivated by polyhedral terrains, sometimes with extra assumptions.
Fat triangulations are also related to the meshes computed in the area of
high-quality mesh generation. The smallest angle of the 
elements of the mesh is a common quality measure~\cite{beg}.
In graph drawing, an embedded planar straight-line graph is said to have constant 
\emph{angular resolution} any two edges meeting at a vertex make an angle that
is at least a constant. Hence, fatness and constant angular resolution are the
same for triangulations.


The original definition of realistic terrains applied to triangulations has 
stronger assumptions than fatness~\cite{mks}. Besides fatness of the triangles,
it assumes that any two edges in the triangulation differ in length by at most
a constant factor (compact), and the outer boundary of the triangulation is a fat convex
polygon (bounded aspect ratio). 

\subsection {Results}

We present lower and upper bounds on the maximum number of convex polygons in a mesh in several settings.
The input can be either a \emph {general} mesh, a \emph {fat} mesh (where every angle of each triangle is at least $\delta$), or a \emph {compact fat} mesh (where additionally,
the ratio between the shortest and longest edge is at most $\rho$).
The output can be either a \emph {potato} (general convex submesh) or a \emph {carrot}
(outerplanar convex submesh, that is, one that contains no vertex of the underlying 
mesh in its interior), and each can additionally be required to be \emph {fat} 
(where the ratio between the largest inscribed disk and the smallest containing 
disk is at most $\gamma$).

\begin {table} [t] \label {tab:results}
\centering
\begin{tabular}{|lllll|}
\hline
  input mesh & output vegetable & lower bound & upper bound & source\\
\hline
\hline
  general        & fat carrots  &  $\Omega(1.5028^n)$ && Section~\ref {sec:gen-lo}\\
  general        & anything     && $O(1.62^n)$ & Section~\ref {sec:gen-up}\\
  fat            & fat potatoes &  $\Omega(n^{\frac12\lfloor\frac{2\pi}\delta\rfloor})$ && Section~\ref {sec:fat-lo} \\
  fat            & anything     && $O(n^{\lceil\frac\pi\delta\rceil})$ & Section~\ref {sec:fat-up}\\
  fat            & fat carrots  &  $\Omega(n^{\lfloor\frac{2\pi}{3\delta}\rfloor})$ && Section~\ref {sec:car-lo}\\
  fat            & carrots      && $O(n^{\lfloor\frac{2\pi}{3\delta}\rfloor})$ & Section~\ref {sec:car-up}\\
  compact fat & carrots      &  $\Omega(n^2)$ & $O(n^2)$ & Section~\ref {sec:com-lo} and~\ref {sec:com-up}\\
  compact fat & fat carrots  &  $\Omega(n)$ &  $O(n)$ & Section~\ref {sec:com-lo} and~\ref {sec:com-up}\\
\hline
\end{tabular}
\caption {Results in this paper; open spaces are directly implied by other bounds.}
\end {table}

In this paper we show that the maximum number of convex polygons in a mesh
can be as large as $\Omega(1.5028^n)$, and we give an upper bound of $O(1.62^n)$.
If the mesh is $\delta$-fat,
there can be $\Omega(n^{\frac12\lfloor\frac{2\pi}\delta\rfloor})$ convex polygons,
and we show an upper bound of $O(n^{\lceil\frac\pi\delta\rceil})$.
It is interesting (albeit not surprising) to see a case where fatness 
reduces the combinatorial complexity from exponential to polynomial.
For carrots, the compactness assumption removes the dependency on the
fatness parameter from the exponent of $n$.
Table~\ref {tab:results} summarizes our results. Note that $\rho $ and $\gamma$ do
not show up; the bounds hold for any constant values of $\rho$ and $\gamma$.

\begin {remark}
In our upper bound arguments, we do not require points to be in general position, and we count all sets of triangles whose union is convex, even if it has collinear edges of $M$ on its boundary.
In the lower bound constructions, we also use collinear points.
This simplifies the exposition, and it is natural to count all convex sets.
If one does want to exclude ``degenerate potatoes'', then the constructed meshes can be perturbed to be in general position, unless when $2\pi/\delta$ is precisely an integer:
in this setting, the number of fat potatoes in a fat mesh becomes
$\Omega(n^{\frac12\lceil\frac{2\pi}\delta\rceil-1})$,
and the number of fat carrots in a fat mesh becomes 
$\Omega(n^{\lceil\frac{2\pi}{3\delta}\rceil-1})$.

\end {remark}

\section {Preliminaries}

A \emph {mesh} is a plane straight-line graph with a finite set of vertices, such that 
all bounded faces are triangles, the interiors of all triangles
are disjoint and the intersection of any pair of triangles is either a vertex or a shared edge. 
  We also denote the set of vertices of a graph $G$ by $V(G)$ and the set of edges by $E(G)$, and say the \emph {size} of $G$ is $n = |V(G)|$.
  We say a mesh $M$ is \emph {maximal} if its triangles completely cover the convex hull of its vertices.\footnote {A maximal mesh is also called a \emph {triangulation}.}
A polygon $P$ is said to \emph {respect} a graph $G$ if all of its 
edges belong to $G$.

  We assume a mesh $M$ is given. 
  We call $M$ \emph {$\delta$-fat}, for some $\delta \in (0, \frac23\pi]$, if every angle of every triangle of $M$ is at least $\delta$.

Let $S = [0, 2\pi)$. We define cyclic addition and subtraction 
$(+, -) : S \times \R \to S$ in the usual way, modulo $2\pi$. 
We call the elements of $S$ \emph {directions} and implicitly 
associate an element $s \in S$ with the vector $(\sin s, \cos s)$.

\section {Potatoes in general meshes} \label {sec:gen}
\subsection {Lower bound} \label {sec:gen-lo}

Let $Q$ be a set of $m$ points evenly spaced on the upper half of a circle.
Assume $m=2^k+1$ for some integer $k$, and
let the points be $v_0,\ldots,v_{m-1}$, clockwise.
Let $M$ consist of the convex hull edges,
then connect $v_0$ and $v_{m-1}$ to $v_{(m-1)/2}$, and recursively
triangulate the subpolygons by always connecting the furthest pair to
the midpoint. The dual of the mesh is a perfectly balanced
binary tree.
Figure~\ref {fig:lo-gen-construction} illustrates the construction.

\eenplaatje {lo-gen-construction}
{ A set $Q$ of $n$ points on a half-circle, triangulated such that the dual tree is a balanced binary tree.
}

Let $N(k)$ be the number of different convex paths in $M$ from $v_0$ to $v_{m-1}$.
Then we have 
  \begin {equation*}
    N(k) = 1 + (N(k-1))^2,\;\;  N(1) = 2 
  \end {equation*}
because we can combine every path from $v_0$ to $v_{(m-1)/2}$ with
every path from $v_{(m-1)/2}$ to $v_{m-1}$, and the extra path is $\overline {v_0,v_{m-1}}$
itself. Using this recurrence we can relate the number $m$ of vertices used
to the number $P(m)$ of convex paths obtained: $P(3)=2$; $P(5)=5$; $P(9)=26$; 
$P(17)=677$. 

Now we place $n$ points evenly spaced on the upper half of a circle.
We triangulate $v_0,\ldots,v_{16}$ as above, and also $v_{16}\ldots,v_{32}$,
and so on. We can make $n/16$ groups of $17$ points
where the first and last point of each group are the same.
Each group is triangulated to give $677$ convex paths; the rest
is triangulated arbitrarily.
In total we get $677^{n/16}=\Omega(1.5028^n)$ convex paths from $v_0$ to $v_{n-1}$.
We omit the one from $v_0$ directly to $v_{n-1}$, and use this edge to complete
every convex path to a convex polygon instead. The number of convex
polygons is $\Omega(1.5028^n)$.\footnote 
{We can, of course, make larger groups of vertices to slightly improve the lower bound, but this does not appear to affect the given $4$ significant digits.}

  \begin {theorem}
    There exists a mesh $M$ with $n$ vertices such that the number of convex polygons that respect $M$ is $\Omega(1.5028^n)$.
    This is true even if $M$ is the Delaunay triangulation of its vertices.
  \end {theorem}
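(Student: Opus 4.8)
The plan is to make the construction above rigorous and then observe that it is already a Delaunay triangulation. Place $n$ points $v_0,\dots,v_{n-1}$ evenly spaced and clockwise on the upper semicircle of a circle $C$, so that $v_0$ and $v_{n-1}$ are the endpoints of a diameter, and first assume $16\mid n$. Partition the indices into $n/16$ consecutive blocks of $17$ vertices, adjacent blocks sharing one endpoint; inside the convex cap bounded by the arc-chain of a block and its spanning chord, triangulate by the recursive rule ``join the two extreme vertices to the midpoint'', so the dual tree of each block is a balanced binary tree; finally triangulate the central convex polygon on $v_0,v_{16},\dots,v_{n-1}$ arbitrarily. Call the result $M$; note $V(M)=\{v_0,\dots,v_{n-1}\}$ and the hull edge $\overline{v_0v_{n-1}}$ lies in $M$. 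The structural fact we need, immediate by induction on the recursion, is: for a block on consecutive vertices $v_a,\dots,v_b$ (of size $2^k+1$) with midpoint $v_c$, every $M$-edge with both endpoints in $\{v_a,\dots,v_b\}$ is either the chord $\overline{v_av_b}$, an edge of the triangulated left sub-cap $v_a\dots v_c$, or an edge of the triangulated right sub-cap $v_c\dots v_b$, and the two sub-caps share only the vertex $v_c$.

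\textbf{Counting.} Because all vertices lie on $C$, any subset of them taken in index order is the vertex sequence of a convex polygon; hence any $W\supseteq\{v_0,v_{n-1}\}$ whose consecutive members are joined by $M$-edges spans a convex polygon respecting $M$, and distinct $W$ give distinct polygons, since a convex polygon determines its vertex set. So it suffices to count such $W$, i.e., convex paths in $M$ from $v_0$ to $v_{n-1}$. Let $N(k)$ count the convex paths from $v_a$ to $v_b$ through the vertices of a block of size $2^k+1$. A convex path on cocircular points visits them in circular order, so a convex $v_a$-to-$v_b$ path other than the chord $\overline{v_av_b}$ must at some step cross from an index below $c$ to an index above $c$; by the structural fact the only $M$-edge doing that is the chord itself, which this path does not use, so the path must visit $v_c$, and therefore factors uniquely as a convex $v_a$-to-$v_c$ path inside the left sub-block followed by a convex $v_c$-to-$v_b$ path inside the right sub-block. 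Conversely any such pair glues into a convex path, again by circular order. Thus $N(k)=1+N(k-1)^2$ with $N(1)=2$, giving $N(2)=5$, $N(3)=26$, $N(4)=677$. Choosing a convex path independently in each of the $n/16$ blocks and concatenating gives $677^{n/16}$ distinct convex chains from $v_0$ to $v_{n-1}$, each closing with $\overline{v_0v_{n-1}}$ into a distinct convex polygon respecting $M$; as $677^{1/16}>1.5028$ there are $\Omega(1.5028^n)$ of them. When $16\nmid n$, run the construction on the largest multiple of $16$ below $n$ and triangulate the $O(1)$ leftover vertices arbitrarily, changing the base by at most a constant factor that the $\Omega$ absorbs.

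\textbf{Delaunay-ness.} Every vertex of $M$ lies on the circle $C$, so the circumscribed circle of each triangle of $M$ is $C$ itself, whose open disk contains no vertex of $M$. Hence $M$ has the empty-circumdisk property and is a Delaunay triangulation of its vertex set. (Since the points are cocircular their Delaunay triangulation is not unique --- every triangulation of the set is Delaunay --- so there is no clash with the particular $M$ we built.)

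\textbf{Main obstacle.} Everything hinges on the exactness of the recurrence in the counting step: one must be sure that inside a block no $M$-edge other than the top chord separates the two halves, so that a non-chord convex path is genuinely forced through the midpoint, and that regluing two convex sub-paths at the midpoint never destroys convexity. Both follow from the recursive structure together with cocircularity, but they are precisely what pins the count down to $N(k)=1+N(k-1)^2$ and hence to $677$; the arithmetic, the non-divisible case, and the cocircularity-implies-Delaunay observation are routine.
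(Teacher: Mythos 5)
Your proposal is correct and follows essentially the same route as the paper: points on a half-circle, blocks of $17$ vertices triangulated so the dual tree is balanced, the recurrence $N(k)=1+N(k-1)^2$ yielding $677$ convex paths per block, concatenation and closure with the chord $\overline{v_0v_{n-1}}$, and the cocircularity observation giving the Delaunay property. You merely make explicit some details the paper leaves implicit (the structural fact forcing paths through the block midpoint, the case $16\nmid n$, and the empty-circumdisk argument), which is fine.
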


\subsection {Upper bound} \label {sec:gen-up}

Let $M$ be any mesh with $n$ vertices.
First, fix a point $p$ inside some triangle of $M$, not collinear with any pair of vertices of $M$. 
We will count only the polygons that contain $p$ for now.
    
    For every vertex $v$ of $T$, let $e_v$ be the edge that $v$ projects onto from $p$. (Assume general position of $p$ w.r.t.\ the vertices.) 
    Let $G$ be the graph obtained from $T$ by removing all such edges $e_v | v \in V(T)$.
    Figure~\ref {fig:up-gen-reduction+up-gen-g} shows an example.
 We turn $G$ into a directed graph by orienting every edge such that $p$ lies to the left
of its supporting line. We are interested in the number of simple cycles that respect $G$.
    Note that $G$ has exactly $2n - 3$ edges, since every vertex not on the convex hull causes one edge to disappear.

\tweeplaatjes {up-gen-reduction} {up-gen-g} {(a) We project each interior vertex of $M$ from $p$ onto the next edge. (b) The graph $G$ obtained by removing the marked edges.}

    We say a path in $G$ is \emph {monotone} around $p$ if the direction from $p$ to its vertices increases monotonically.

\begin{lemma}
The number of convex polygons in $M$ that have $p$ in their interior
is bounded from above by the number of simple cycles in $G$.
\end{lemma}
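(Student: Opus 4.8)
The plan is to establish an injection from the set of convex polygons containing $p$ into the set of simple cycles of $G$. Given a convex polygon $P$ that respects $M$ and has $p$ in its interior, I would show that (i) $P$ actually respects $G$ — that is, none of its boundary edges was among the removed edges $e_v$ — and (ii) the boundary of $P$, traversed so that $p$ lies to its left, is a simple cycle consistent with the orientation of $G$. Claim (ii) is essentially immediate: a convex polygon containing $p$ has its boundary winding once around $p$, so every edge has $p$ on the same (left) side of its supporting line, matching how $G$ was oriented; and the boundary of a simple polygon is a simple cycle. So the real content is claim (i).

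For claim (i), the key observation is that each removed edge $e_v$ is "hidden" behind the vertex $v$ as seen from $p$: the open segment from $p$ to $v$ is crossed by (or more precisely, $v$ projects onto) $e_v$, so $e_v$ lies strictly farther from $p$ in the direction of $v$ than $v$ itself does. I would argue that if $P$ used such an edge $e_v$ on its boundary, then since $P$ is convex and contains $p$, the vertex $v$ — which lies on the segment from $p$ toward $e_v$, strictly between $p$ and the supporting line of $e_v$ — would have to lie strictly inside $P$. But $v$ is a vertex of $M$, and the endpoints of $e_v$ are also vertices of $M$ adjacent to $v$ via the triangle that $e_v$ bounds on the side toward $p$; convexity of $P$ together with $v$ being strictly interior forces a contradiction with $e_v$ being a boundary edge (the edge $e_v$ would be a chord of $P$ with a vertex of $P$'s neighborhood structure pulled inside, violating convexity at the relevant triangle). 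More cleanly: the triangle of $M$ incident to $e_v$ on the $p$-side has $v$ as its apex and lies on the same side of the line through $e_v$ as $p$; so if $e_v \subseteq \partial P$ then $p$ and this whole triangle lie on the interior side of $e_v$, hence $v \in \mathrm{int}(P)$, so the two edges of this triangle incident to $v$ are chords of $P$, and then $e_v$ cannot be an edge of the convex polygon $P$ since $P$ locally bulges inward there — contradiction.

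With (i) and (ii) in hand, the map $P \mapsto \partial P$ sends each convex polygon containing $p$ to a simple cycle of $G$, and it is injective because a polygon is determined by its boundary. Hence the number of such polygons is at most the number of simple cycles of $G$, which is the statement of the lemma.

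I expect the main obstacle to be making the argument for claim (i) fully rigorous in the presence of collinearities among the vertices of $M$ (the remark explicitly allows degenerate/collinear potatoes). One must be careful that "$v \in \mathrm{int}(P)$" genuinely follows and that the local convexity contradiction is watertight even when some of $P$'s boundary edges are collinear; the general-position assumption on $p$ (not collinear with any pair of vertices, so each projection $e_v$ is well defined and unique) is what keeps the definition of $G$ clean, but the polygons themselves may still be degenerate, so the convexity argument should be phrased in terms of half-planes and supporting lines rather than strict angle conditions at vertices.
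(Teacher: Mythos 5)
Your step (ii) is fine, but step (i) — that a convex polygon $P$ containing $p$ never uses a removed edge $e_v$ — is false, and the ``local convexity contradiction'' you invoke does not exist. You correctly deduce that if $e_v=\overline{ab}$ lies on $\partial P$ then $v\in\mathrm{int}(P)$, but that is not a contradiction: the lemma is about potatoes, which are allowed to contain vertices of $M$ in their interior. In that situation the triangle $avb$ simply lies inside $P$ and the edges $\overline{av}$, $\overline{vb}$ are interior chords of $P$; nothing ``bulges inward'' and no convexity condition is violated. Concretely, take a square $abcd$ with one interior vertex $v$ joined to all four corners, let $P$ be the whole square (a convex polygon respecting $M$), and place $p$ generically inside triangle $abv$; the ray from $p$ through $v$ then exits through $\overline{cd}$, so $e_v=\overline{cd}$ is removed from $G$, yet $\overline{cd}$ is an edge of $P$ and $p\in\mathrm{int}(P)$. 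So the map $P\mapsto\partial P$ does not land in the set of cycles of $G$, and your injection is not established.

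The paper's proof handles exactly this situation differently: each removed edge $e_v$ on $\partial P$ is replaced by the detour through $v$ (the two edges $\overline{av}$ and $\overline{vb}$), recursively, since those edges may themselves be removed. The image is then a simple but generally \emph{non-convex} cycle in $G$ — this is why the subsequent counting argument bounds all simple cycles of $G$, not just convex ones. Simplicity is preserved because $\partial P$ is monotone in angle around $p$ and $v$ lies angularly between $a$ and $b$ as seen from $p$; injectivity holds because the detour creates a reflex (concave) vertex at $v$, so the replaced edges can be identified and undone from the cycle alone. To repair your argument you would need to adopt this rerouting (or something equivalent); the direct identity map cannot work.
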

    \begin {proof}
      With each convex polygon, we associate a cycle by replacing any edges $e_v$ that were removed by the two edges via $v$, recursively.
      This results in a proper cycle because the convex polygon was already a monotone path around $p$, and this property is maintained.
      Each convex polygon results in a different cycle because the angle from the vertices of $e_v$ via $v$ is always concave.
    \end {proof}

  \begin {observation} \label {cla:star}
    The complement of the outer face of $G$ is star-shaped with $p$ in its kernel. 
  \end {observation}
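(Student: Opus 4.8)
The plan is to show that every point $q$ in the closed region $R$ bounded by the outer boundary of $G$ is visible from $p$, i.e., the segment $\overline{pq}$ stays inside $R$. First I would recall what $G$ looks like: it is obtained from the (maximal) mesh $T$ — the triangulation of the vertex set — by deleting, for each vertex $v$, the single edge $e_v$ onto which $v$ projects when viewed from $p$. Since $p$ lies inside some triangle of $T$ and $T$ is maximal, the union of the triangles of $T$ is the convex hull $H$ of the vertices, so $R = H$ is already convex; hence every $q \in H$ sees $p$. This makes the statement almost immediate \emph{provided} one reads ``the complement of the outer face of $G$'' as the convex hull $H$, which is correct because deleting the edges $e_v$ does not change which bounded region the graph encloses — it only removes chords in the interior, merging adjacent triangles into larger (possibly non-convex) faces, but the union of all bounded faces of $G$ is still exactly $H$.

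The key steps, in order, are: (1) observe $T$ is a triangulation of its vertex set, so $\bigcup_{\triangle \in T} \triangle = H$, the convex hull, which is convex; (2) note that each removed edge $e_v$ is a chord of $H$ (both its endpoints are mesh vertices, hence in $H$, and its interior lies in the interior of $H$ since $v$ is projected strictly onto it from an interior point $p$), so removing it does not disconnect $H$ into multiple bounded faces nor expose any part of the boundary of $H$; hence the union of the bounded faces of $G$ equals $H$; (3) conclude that the complement of the outer face of $G$ is $H$, which is convex, and therefore star-shaped with respect to any of its points — in particular with $p$ in its kernel, since $p$ lies in the interior of $H$.

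The only mild subtlety — and the step I would be most careful about — is making precise that ``the complement of the outer face of $G$'' is indeed all of $H$ and not some proper sub-region: one must check that $G$ still contains the full boundary of $H$ (true, because the convex-hull edges of $T$ are never of the form $e_v$ — a hull edge is the projection target of no vertex, as any interior point $p$ projects interior vertices onto interior edges and projects nothing onto hull edges, or at worst one should note a hull edge $e_v$ would still be present as it bounds the outer face regardless), and that $G$ is connected so there is a single bounded region. Both are straightforward consequences of $T$ being a triangulation and of the edges removed being interior chords. Once $R = H$ is established, star-shapedness with $p$ in the kernel is just convexity of $H$ together with $p \in \mathrm{int}(H)$, requiring no further argument.
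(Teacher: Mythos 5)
There is a genuine gap: step (2) of your plan, and with it the reduction of the observation to convexity of the hull, is false. A removed edge $e_v$ need not be an interior chord of the convex hull $H$: the first edge hit by the ray from $p$ through an interior vertex $v$ after passing $v$ can perfectly well be a convex-hull edge, and the paper's definition of $G$ removes it just the same (this is also exactly what makes the count of $2n-3$ edges come out: one deleted edge per interior vertex, with no exception for hull edges). Smallest counterexample: a triangle $abc$ with a single interior vertex $v$ joined to $a$, $b$, $c$, and $p$ placed generically inside triangle $vab$. Beyond $v$, the ray from $p$ through $v$ cannot hit one of the spokes $\overline{va},\overline{vb},\overline{vc}$ (general position), so $e_v$ is one of the hull edges, say $\overline{ac}$. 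Removing it merges triangle $vac$ with the outer face, and the complement of the outer face of $G$ is the quadrilateral $a\,b\,c\,v$, a proper and \emph{non-convex} subset of $H$ (reflex at $v$) --- although it is indeed star-shaped with $p$ in its kernel, as the observation asserts. So your claim that ``the union of the bounded faces of $G$ equals $H$'' fails, and your fallback remark that a hull edge $e_v$ ``would still be present as it bounds the outer face'' has no basis in the definition of $G$. Consequently the statement cannot be obtained from convexity of $H$ at all; the region really is non-convex in general, and its star-shapedness is a property of \emph{how} the deleted edges were chosen.

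A correct argument must therefore use the projection structure of the edges $e_v$: each deleted edge is the edge lying directly behind its blocking vertex $v$ as seen from $p$, so a triangle of $M$ is absorbed into the outer face of $G$ only when it sits ``behind'' such a vertex along the view from $p$, and one shows, e.g., that along every ray emanating from $p$ the points belonging to the outer face of $G$ form a terminal portion of the ray (once the ray leaves the complement it never re-enters). That is the content the observation is actually recording, and it is precisely what your proposal bypasses by assuming the region is the convex hull. (A side remark: you also silently assume $M$ is maximal so that its triangles cover $H$; the paper does implicitly work with a triangulation here, so this is minor compared with the hull-edge issue.)
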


  \begin {observation} \label {cla:deg1}
    Let $e$ be an edge on the outer face of $G$ from $u$ to $v$. Then $u$ has outdegree $1$, or $v$ has indegree $1$ (or both).
  \end {observation}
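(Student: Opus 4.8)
By Observation~\ref{cla:star}, the complement of the outer face of $G$ --- call it the \emph{core} $B$ --- is star-shaped with $p$ in its kernel. The first thing to note is what this forces at an outer-face edge $u\to v$: the core must lie on the $p$-side (the left) of $uv$, for otherwise a point of $B$ just across $uv$ would not be visible from $p$. Hence the face of $M$ on the \emph{non}-$p$ side of $uv$ has become part of the outer face of $G$; that face is either the outer face of $M$ already (when $uv$ is a hull edge of $M$) or a triangle $uvy$ of $M$ that got swallowed by the outer face when the edges $e_x$ were deleted.

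I would argue the contrapositive: suppose $u$ has outdegree $\ge 2$ and $v$ has indegree $\ge 2$, witnessed by edges $u\to w$ ($w\neq v$) and $z\to v$ ($z\neq u$), and derive a contradiction. Consider first the case that $uv$ is not a hull edge of $M$. In a plane graph two faces merge only across a deleted edge, and every deleted edge is some $e_x$; since $uv\in E(G)$, the triangle $uvy$ can have been absorbed only if $uy=e_x$ or $vy=e_{x'}$ for some mesh vertices $x,x'$. The statement is invariant under reversing all edge orientations of $G$ and simultaneously renaming $u\leftrightarrow v$ --- this leaves the set of deleted edges and the core unchanged, swaps all in- and out-degrees, and reverses the boundary walk --- and under that symmetry the two subcases are interchanged. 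So it suffices to treat $uy=e_x$ and show that this already forces $u$ to have outdegree $1$.

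The heart of the argument is therefore the implication ``$uy=e_x$ implies $u$ has outdegree $1$''. Here $e_x=uy$ pins $x$ down as the apex of the unique triangle of $M$ lying on the $p$-side of the line through $u$ and $y$, the ray from $p$ through $x$ crossing the open segment $uy$. I would then scan the edges of $M$ incident to $u$ starting at $uv$ and proceeding away from $p$ --- through the angular wedge at $u$ that contains every out-neighbour of $u$ --- and prove, by induction along the resulting fan of mesh triangles, that each of these triangles has been swallowed by the outer face and hence that each of the corresponding edges of $M$ at $u$ is itself a deleted edge; this leaves $uv$ as the only edge of $G$ leaving $u$. The remaining case (where $uv$ is a hull edge of $M$, so there is no triangle $uvy$) is handled by the same kind of bookkeeping, with the reflex wedge at $v$ or at $u$ playing the role of the absorbed fan. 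The step I expect to be the real obstacle is that induction along the fan at $u$: one has to keep straight the difference between a mesh triangle being swallowed by the outer face and a single mesh edge being deleted, and repeatedly invoke the star-shapedness of $B$ to be sure the whole out-wedge at $u$ stays outside the core; everything else is routine manipulation of directions modulo $2\pi$ together with the definition of $G$.
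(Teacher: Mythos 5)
Your reduction hinges on the claim that ``$uy=e_x$ (i.e., the edge $uy$ of the absorbed triangle $uvy$ was deleted) already forces $u$ to have outdegree $1$'', and that claim is false, so the fan-scanning induction you plan at $u$ cannot be made to work. Concretely, take the triangulation with vertices $u=(0,0)$, $v=(2,0)$, $a=(3,-\tfrac12)$, $w=(\tfrac52,\tfrac52)$, $c=(-\tfrac12,\tfrac72)$, $b=(-1,1)$, triangles $uva$, $uvw$, $vaw$, $uwc$, $ucb$, and $p=(0,2)$ (inside triangle $uwc$, not collinear with any two vertices). The only deleted edge is $ua$: the ray from $p$ through $v$ enters the triangle $uva$ beyond $v$ and first hits $ua$, so $e_v=ua$; all other vertices project onto nothing. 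Since $ua$ is a hull edge, the triangle $uva$ is absorbed into the outer face of $G$, so $e=uv$ is an edge of $G$ on the outer face, its non-$p$-side triangle is $uva$, and your hypothesis ``$uy=e_x$'' holds with $y=a$, $x=v$. Nevertheless $u$ has outdegree $2$ in $G$: both $uv$ and $uw$ are outgoing (and neither is deleted). The complement of the outer face here is the pentagon-like region $u,v,a,w,c,b$, which is star-shaped with $p$ in its kernel, so this is a legitimate instance of the setting, and Observation~\ref{cla:deg1} is saved only because $v$ has indegree $1$. In other words, the deletion of $uy$ alone carries no information about the out-edges of $u$; the two degree hypotheses must be played against each other, and your proposed decomposition (via the reflection symmetry into two one-sided subcases, each using only one of them) discards exactly the interaction that the statement needs.

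The paper's proof uses both hypotheses jointly and works on the $p$-side of $e$, not on the absorbed side. By Observation~\ref{cla:star}, a second outgoing edge at $u$ cannot go to the right of $e$ (a point of it near $u$ would see $p$ through the outer face just beyond the relative interior of $e$), and symmetrically a second incoming edge at $v$ cannot go to the right of $e$. So a second out-edge at $u$ points strictly into the wedge at $u$ between $uv$ and $up$, and a second in-edge at $v$ points strictly into the wedge at $v$ between $vu$ and $vp$; since the mesh triangle adjacent to $e$ on the $p$-side has its edges at $u$ and $v$ angularly no farther from $e$ than those edges, its apex is squeezed strictly inside the triangle spanned by $p$ and $e$. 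That apex then projects from $p$ onto $e$ itself, so $e$ would have been removed, contradicting $e\in E(G)$. If you want to salvage your outline, the step to aim for is this two-sided squeezing (or at least the weaker disjunctive conclusion ``outdegree $1$ at $u$ or indegree $1$ at $v$'' in each of your subcases), not a one-sided ``all other out-edges at $u$ are deleted'' statement, which the example above refutes.
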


  \begin {proof}
    By Observation~\ref {cla:star}, there are no outgoing edges from $u$ to the right of $e$, and no incoming edges to $v$ to the right of $e$. So, if the outdegree of $u$ is more than one and the indegree of $v$ is more than one, there must have been a vertex in $T$ inside the triangle formed by $p$ and $e$. But then, this vertex would have been projected from $p$ onto $e$, and $e$ would not have been in $G$.
  \end {proof}

  If $F \subset E(G)$ is a subset of the edges of $G$, we also consider the subproblem of counting all simple cycles in $G$ that use all edges in $F$, the \emph {fixed} edges.
  For a tuple $(M, G, F)$, we define the \emph {potential} $\rho$ to be the number of vertices of $M$ (or $G$) minus the number of edges in $F$, i.e., $\rho(M,G,F) = |V(G)| - |F|$.
  Clearly, the potential of a subproblem is an upper bound on the number of edges that can still be used in any simple cycle.
  
  We will now show that the number of cycles in a subproblem can be expressed in terms of subproblems of smaller potential. Let $Q(k)$ be the maximum number of simple cycles in any subproblem with potential $k$.
  
  \begin {lemma}
    The function $Q(\cdot)$ satisfies
    \begin {equation*}
    Q(k) \leq Q(k-1) + Q(k-2),\;\; 
    Q(0) = Q(1) = 1\,.
    \end {equation*}
  \end {lemma}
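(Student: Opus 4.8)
The plan is to pick a convenient edge to branch on, and to argue that in either branch (use it / discard it) the potential drops by at least one, and in one of the two branches it drops by two. The natural candidate is an edge $e = (u,v)$ on the outer face of $G$, since Observation~\ref{cla:deg1} gives us structural leverage there: $u$ has outdegree $1$ or $v$ has indegree $1$. (If $F$ already contains such an outer edge, or more generally if $G$ has few edges, we hit the base cases $Q(0)=Q(1)=1$, where the only simple cycle is the empty one or there is simply not enough ``room'' left; these need a short separate check.)

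First I would split the count of simple cycles in the subproblem $(M,G,F)$ into those that use $e$ and those that do not. If a cycle does not use $e$, we may delete $e$ from $G$ altogether, obtaining a subproblem on the same vertex set with the same $F$; here $\rho$ stays the same, which is not good enough, so instead I would delete $e$ \emph{and} the degree-$1$ vertex it exposes. Concretely, say $v$ has indegree $1$ in $G$ (the other case is symmetric). Any simple cycle through $v$ must enter $v$ along $e$; so a cycle avoiding $e$ also avoids $v$, and we may delete $v$ and its incident edges, losing one vertex and hence dropping $\rho$ by $1$ — this is the $Q(k-1)$ term. If instead the cycle uses $e$, then since $v$ has indegree $1$, the edge $e$ is forced, and moreover the unique edge leaving $v$ along the cycle is also forced (there is only a limited set of out-edges at $v$, and on the outer face Observation~\ref{cla:star} constrains them). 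The cleanest version: contract or fix $e$ into $F$, and observe that because $v$ has indegree $1$, fixing $e$ also determines that $v$ must be ``passed through,'' letting us additionally fix the out-edge at $v$ — two edges added to $F$, so $\rho$ drops by $2$, giving the $Q(k-2)$ term. Adding the two branches yields $Q(k) \le Q(k-1) + Q(k-2)$.

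The main obstacle is making the ``drop by $2$'' in the $e$-is-used branch fully rigorous: I need to be sure that fixing the single in-edge at $v$ genuinely forces a \emph{second} edge (the out-edge at $v$), and that the resulting tuple is still a legitimate subproblem of the claimed potential — in particular that $v$ really has a unique out-edge available once $e$ is committed, or else that $v$ has outdegree $1$ to begin with. If $v$ has indegree $1$ but outdegree $\ge 2$, I should instead look at the other endpoint $u$: Observation~\ref{cla:deg1} guarantees $u$ has outdegree $1$ whenever $v$ does not have indegree $1$, but here both could fail to be the ``extreme'' vertex simultaneously only if... — so the careful case analysis is: pick the outer edge so that the guaranteed degree-$1$ endpoint is the one we contract. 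A secondary subtlety is bookkeeping at the base: when $\rho = 1$ there is at most one more edge to place, which cannot close a cycle by itself, so the only simple cycle counted is whatever is already forced by $F$ (at most one), giving $Q(1) = 1$; and $\rho = 0$ similarly forces everything, $Q(0) = 1$. Once these boundary cases and the choice-of-endpoint technicality are nailed down, the recurrence follows by straightforward addition over the two branches.
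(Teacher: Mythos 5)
There is a genuine gap, and you have in fact put your finger on it yourself without closing it. Your branching scheme insists that the branch ``the cycle uses $e$'' yields a potential drop of $2$ by forcing both $e$ and the out-edge of $v$. This only works when $v$ has outdegree $1$, i.e.\ total degree $2$. When $v$ has indegree $1$ but outdegree $\geq 2$, no second edge is forced, and your proposed repair --- ``look at the other endpoint $u$ instead'' --- does not help: Observation~\ref{cla:deg1} is a disjunction ($u$ has outdegree $1$ \emph{or} $v$ has indegree $1$), so once the guarantee is discharged by $v$, you know nothing about $u$, and there may be no choice of non-fixed outer edge whose ``good'' endpoint has degree $2$. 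At that point your recursion only gives $Q(k)\le Q(k-1)+Q(k-1)$, which is too weak.

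The paper's proof closes exactly this case by putting the drop of $2$ in the \emph{other} branch. If $v$ has indegree $1$ and degree $>2$, branch on $e$: using $e$ fixes one edge (drop $1$); not using $e$ means the cycle avoids $v$ entirely (its only in-edge is $e$), and then one applies Observation~\ref{cla:deg1} a second time, to the outer-face edge $e'=\overline{vw}$ leaving $v$: since $v$ has outdegree $>1$, the head $w$ must have indegree $1$, with its unique in-edge coming from the avoided vertex $v$, so $w$ is also unusable. Deleting both $v$ and $w$ drops the potential by $2$. This re-use of the observation on the \emph{next} outer edge is the missing idea; your case where $v$ has degree $2$ matches the paper's case (i) and is fine (modulo the minor bookkeeping you mention: fixed edges incident to deleted vertices kill the branch, and the base cases $Q(0)=Q(1)=1$ are as the paper states).
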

  
  \tweeplaatjes {up-gen-case1} {up-gen-case2} {Two cases for $e$.}
  
  \begin {proof}
    Let $(M,G,F)$ be a subproblem and let $k = \rho(M,G,F)$. If $k = 1$ then $|F| = |V(G)| - 1$, so the number of fixed edges on the cycle is one less than the number of vertices available. Therefore the last edge is also fixed, if any cycle is possible. If $k=0$, all edges are fixed.
    
    For the general case, suppose all edges on the outer face of $G$ are fixed. Then there is only one possible cycle. 
If any vertex on the outer face has degree $2$ and only one incident edge fixed, we fix
the other incident edge too. Suppose there is at least one edge, $e = \overline {uv}$, on the outer face that is not fixed. By Observation~\ref {cla:deg1} one of its neighbours must have degree $1$ towards $e$. Assume without loss of generality that this is $v$. We distinguish two cases.

(i) The degree of $v$ is $2$, as illustrated in Figure~\ref {fig:up-gen-case1}. 
Any cycle in $G$ either uses $v$ or does not use $v$.
If it does not use $v$ we have a subproblem of potential $k-1$.
If it uses $v$, it must also use its two incident edges, so we can include these edges in $F$ to obtain a subproblem of potential $k-2$. 
So, the potential $\rho(M,G,F) \leq Q(k-1) + Q(k-2)$.

(ii) The degree of $v$ is larger than $2$, as illustrated in Figure~\ref {fig:up-gen-case2}.
Any cycle in $G$ either uses $e$ or does not use $e$. If it uses $e$, we can add $e$ to $F$ to obtain a subproblem of potential $k-1$. If it does not use $e$, then consider $v$ and the edge $e' = \overline{vw}$ that leaves $v$ on the outer face. Since $v$ has indegree $1$ but total degree greater than $2$, it must have outdegree greater than $1$. Therefore, by Claim~\ref {cla:deg1}, $w$ must have indegree $1$. Therefore, also $w$ will not be used by any cycle in $G$ that does not use $e$, and we can remove $v$ and $w$ to obtain a smaller graph. Again, we also remove all incident edges; if any of them was fixed we have no solutions.
      We obtain a subproblem of potential $k-2$ in this case. Again, the potential $\rho(M,G,F) \leq Q(k-1) + Q(k-2)$.
  \end {proof}

    This expression grows at a rate of the root of $x^2 - x - 1 = 0$, which is approximately $1.618034$.
  
    Because every convex polygon must contain at least one triangle of $M$, we just place $p$ in each triangle and multiply the bound by $2n$. Since $1.62$ is a slight overestimate (by rounding) of the root,
    we can ignore the factor $2n$ in the bound.

  \begin {theorem}
    Any mesh $M$ with $n$ vertices has $O(1.62^n)$ convex polygons that respect $M$.
  \end {theorem}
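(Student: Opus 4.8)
The plan is to combine the two lemmas already established with an elementary analysis of the recurrence for $Q(\cdot)$. Recall that the first lemma bounds the number of convex polygons containing a fixed generic point $p$ by the number of simple cycles in the auxiliary graph $G$, and that $G$ has $2n-3$ edges on $n$ vertices. The second lemma shows that $Q(k)$, the maximum number of simple cycles over all subproblems of potential $k$, satisfies $Q(k) \le Q(k-1) + Q(k-2)$ with $Q(0) = Q(1) = 1$. So the heart of the argument is just: (i) observe that counting all simple cycles in $G$ is the subproblem $(M, G, \emptyset)$, whose potential is $\rho(M, G, \emptyset) = |V(G)| = n$; (ii) conclude that the number of convex polygons through $p$ is at most $Q(n)$; (iii) solve the Fibonacci-type recurrence to get $Q(n) = O(\varphi^n)$ where $\varphi = (1+\sqrt5)/2 \approx 1.618034$ is the positive root of $x^2 - x - 1 = 0$; (iv) remove the dependence on $p$ by summing over all triangles.

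Concretely, first I would note that $Q(k)$ is dominated by the Fibonacci numbers: an easy induction on $k$, using the recurrence and the base cases, gives $Q(k) \le F_{k+1}$ (with $F_1 = F_2 = 1$), and the standard closed form yields $F_{k+1} < \varphi^{k+1} = O(\varphi^k)$. Hence the number of convex polygons with $p$ in their interior is $O(\varphi^n)$. Next, since every convex polygon that respects $M$ has nonempty interior and therefore contains at least one triangle of $M$ — indeed it contains some point interior to a triangle — we can pick, for each of the at most $2n$ triangles, a generic point $p$ inside it (generic meaning not collinear with any two vertices of $M$, as required for the construction of $G$) and observe that every convex polygon is counted for at least one such choice. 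Summing, the total number of convex polygons is at most $2n \cdot O(\varphi^n)$.

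Finally I would absorb the polynomial factor: because $1.62$ strictly exceeds $\varphi = 1.618034\ldots$, we have $2n \cdot \varphi^n = O(1.62^n)$, since $(1.62/\varphi)^n$ grows faster than any polynomial. This gives the stated bound $O(1.62^n)$, completing the proof.

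I do not expect any genuine obstacle here: the two lemmas do all the structural work, and what remains is the routine solution of a linear recurrence plus the trivial observation that a convex polygon contains a triangle of the mesh. The only point requiring a little care is making sure the ``multiply by $2n$'' step is legitimate — i.e., that a fixed generic $p$ can be chosen inside each triangle simultaneously with the general-position requirement used to define $e_v$ and $G$, and that every convex polygon is captured by at least one $p$ — but this is immediate since a convex polygon respecting $M$ is a union of triangles of $M$ and any interior point of any of those triangles works. The mild ``cheat'' of rounding $\varphi$ up to $1.62$ is exactly what lets us discard the $2n$ factor without further comment.
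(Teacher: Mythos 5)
Your proposal is correct and follows essentially the same route as the paper: apply the two lemmas, note the initial subproblem has potential $n$ so the cycle count is bounded by the Fibonacci-type recurrence solving to $O(\varphi^n)$ with $\varphi\approx 1.618034$, then place $p$ in each of the at most $2n$ triangles and absorb the linear factor into $O(1.62^n)$ because $1.62>\varphi$. You simply spell out a couple of steps the paper leaves implicit (the induction $Q(k)\le F_{k+1}$ and the legitimacy of choosing a generic $p$ per triangle), which is fine.
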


\section {Potatoes in fat meshes}  \label {sec:fat}
\subsection {Lower bound} \label {sec:fat-lo}

  Let $k = \lfloor\frac{2\pi}\delta\rfloor$, and let $l = \sqrt{\frac{n}{2k}}$. 
  Let $Q$ be a regular $k$-gon, and for each edge $e$ of $Q$ consider the intersection point of the supporting lines of the neighbouring edges.
  Let $Q'$ be a scaled copy of $Q$ that goes through these points.
  Now, consider a sequence $Q = Q_1, Q_2, \ldots, Q_l = Q'$ of $l$ scaled copies of $Q$ such that the difference between the radii of consecutive copies is always equal.
  We extend the edges of each copy until they touch $Q'$.
  Figure~\ref {fig:lo-fat-essential} illustrates the construction so far.

\drieplaatjes {lo-fat-essential} {lo-fat-final} {lo-fat-perturbed}
{ (a) Essential part of the construction, allowing $l^k$ choices for a convex polygon.
  (b) Final mesh.
  (c) The construction can be perturbed so as not to have collinear vertices.
}

  \begin {claim}
  The graph constructed so far allows at least $l^k$ different convex polygons.
  \end {claim}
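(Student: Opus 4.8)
The plan is to exhibit $l^k$ distinct convex polygons that respect the graph, one for every function $\sigma$ assigning to each of the $k$ edge-directions of $Q$ a value in $\{1,\dots,l\}$; informally, $\sigma(j)$ records which of the $l$ nested copies $Q_1=Q,\dots,Q_l=Q'$ supplies the $j$-th side of the polygon.

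Let $o$ be their common centre and index by $j\in\Z/k\Z$ the $k$ outward edge-normal directions of $Q$, with unit vectors $u_j$. For $i\in\{1,\dots,l\}$ let $\ell_{i,j}$ be the supporting line of the $j$-th edge of $Q_i$, let $a_i$ be the apothem of $Q_i$ (so $\ell_{i,j}=\{x:\langle x,u_j\rangle=a_i\}$), and let $H_{i,j}$ be the closed half-plane bounded by $\ell_{i,j}$ containing $o$. Then $a_1<a_2<\dots<a_l$, with $a_l$ the apothem of $Q'$. In the graph built so far, $\ell_{i,j}$ is realised as the chord $c_{i,j}=\ell_{i,j}\cap Q'$ (the $j$-th edge of $Q_i$, prolonged until it meets $\partial Q'$), subdivided into graph edges by its crossings with the other chords; each such crossing lies inside $Q'$ on two chords and is therefore a graph vertex. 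Now for $\sigma:\Z/k\Z\to\{1,\dots,l\}$ put $R_\sigma=\bigcap_{j}H_{\sigma(j),j}$. Since $a_1\le a_{\sigma(j)}\le a_l$ for all $j$, we get $Q=Q_1\subseteq R_\sigma\subseteq Q_l=Q'$, so $R_\sigma$ is a convex body with interior.

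I would then establish two facts. First, $R_\sigma$ is a convex $k$-gon whose $j$-th side lies on $c_{\sigma(j),j}$, and hence respects the graph. The only delicate point is that no side is lost, i.e.\ $\ell_{\sigma(j),j}\cap R_\sigma\neq\emptyset$ for every $j$, and this is exactly where the scale of $Q'$ is used. Let $\hat K_j=\bigcap_{j'\neq j}H_{1,j'}$ be the intersection of all edge-half-planes of $Q$ except the $j$-th. Its vertex farthest in direction $u_j$ is the point $p_j$ where the supporting lines of the two edges of $Q$ neighbouring edge $j$ meet --- one of the very points through which $Q'$ was made to pass --- so $p_j\in\partial Q'$ lies in direction $u_j$ at distance exactly $a_l$ (it is an edge-midpoint of $Q'$), and the normal cone of $\hat K_j$ at $p_j$, spanned by $u_{j-1}$ and $u_{j+1}$, contains $u_j$; hence $\max_{x\in\hat K_j}\langle x,u_j\rangle=a_l$. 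Since $\hat K_j$ is convex, contains $o$, and reaches distance $a_l\ge a_{\sigma(j)}$ in direction $u_j$, it meets the line $\ell_{\sigma(j),j}$; and since $\hat K_j\subseteq\bigcap_{j'\neq j}H_{\sigma(j'),j'}$ and $\ell_{\sigma(j),j}\subseteq H_{\sigma(j),j}$, such a point lies in $R_\sigma$. Thus $R_\sigma$ has a (possibly zero-length) side on each of the $k$ distinct lines $\ell_{\sigma(j),j}$; that side lies in $R_\sigma\cap\ell_{\sigma(j),j}\subseteq Q'\cap\ell_{\sigma(j),j}=c_{\sigma(j),j}$, hence is a union of consecutive graph edges, and each vertex of $R_\sigma$, being $\ell_{\sigma(j),j}\cap\ell_{\sigma(j+1),j+1}$ inside $Q'$, is a graph vertex; so $\partial R_\sigma$ is a closed walk of graph edges. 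Second, the $R_\sigma$ are pairwise distinct: the argument just given shows $\max_{x\in R_\sigma}\langle x,u_j\rangle=a_{\sigma(j)}$ for every $j$, and $i\mapsto a_i$ is injective, so $\sigma$ is recovered from $R_\sigma$. This gives at least $l^k$ convex polygons respecting the graph. (That the whole graph still has $O(n)$ vertices --- roughly $kl$ polygon vertices and $O(k^2l^2)$ chord crossings, which is why $l=\sqrt{n/2k}$ is chosen --- is separate bookkeeping outside the claim.)

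The main obstacle is the non-degeneracy step above: ruling out that some selection $\sigma$ collapses a side, which would both drop $R_\sigma$ below $k$ sides and destroy the recovery of $\sigma$ from $R_\sigma$. This is the whole reason $Q'$ is taken through the intersection points of neighbouring edges of $Q$: a larger $Q'$ would let an innermost side flanked by outermost sides disappear. The remaining ingredients --- chords are realised as paths in the graph, their crossings are graph vertices, and $\sigma$ is read off from the support function --- are routine.
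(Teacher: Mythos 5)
Your proof is correct and follows exactly the argument the paper intends (the claim is stated without proof, justified only by Figure~\ref{fig:lo-fat-essential}): pick one of the $l$ parallel chords per side of the $k$-gon, intersect the corresponding half-planes, and recover the choice from the support function. Your added details — the non-degeneracy of each side via the point $p_j\in\partial Q'$ and the injectivity of $\sigma\mapsto R_\sigma$ — are precisely the verifications the paper leaves implicit, and they check out.
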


  We now add vertices and edges to turn the construction into a $\delta$-fat mesh. 
  We use $l-1 \choose 2$ more vertices per sector, placing $l - i$ vertices on each edge of $Q_i$ to ensure that all angles are bounded by $\delta$.
  We need $O(lk)$ vertices to triangulate the interior using some adaptive mesh generation method.
  The final mesh can be seen in Figure~\ref {fig:lo-fat-final}. 

 The construction uses $\frac32kl^2 + O(kl)$ vertices, and since we have 
$l = \sqrt {\frac{n}{2k}}$, there are 
$\frac32kl^2 + O(kl)=\frac32k\frac{n}{2k}+O(k\sqrt {\frac{n}{2k}})=\frac34n +O(\sqrt{nk})\leq n$ vertices in total.

  Observe that the triangles of the outer ring are Delaunay triangles. The inner part can also be triangulated with Delaunay triangles, since the Delaunay triangulation maximises the smallest angle of any triangle.

  \begin {theorem}
    There exists a $\delta$-fat mesh $M$ of size $n$ such that the number of convex polygons that respect $M$ is $\Omega (n^{\frac12\lfloor\frac{2\pi}\delta\rfloor})$.
    This is true even if $M$ is required to be the Delaunay triangulation of its vertices.
  \end {theorem}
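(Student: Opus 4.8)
The plan is to combine the Claim stated above with the vertex count already carried out, and then to supply the two verifications that are still missing: that the final mesh is $\delta$-fat, and that it can be taken to be a Delaunay triangulation. I would first prove the Claim. For each of the $k$ edge directions of the regular $k$-gon $Q$, the construction so far provides $l$ parallel candidate edges — the extended edge of $Q_1,\dots,Q_l$ in that direction — and I would show that choosing one candidate per direction, independently over all $k$ directions, always produces a convex $k$-gon all of whose vertices and edges are, by construction, vertices and edges of the graph (the pairwise intersections of the extended edges are among the vertices of $M$, and each portion between consecutive intersections is an edge, possibly a union of collinear edges, which the remark allows). The geometric point is that the supporting lines of two cyclically consecutive edges of a single $Q_i$ meet, and, since $Q'$ passes, for each edge $e$ of $Q$, through the intersection of the supporting lines of the two edges adjacent to $e$, while every $Q_i$ lies between $Q_1$ and $Q'$, the supporting lines of a chosen edge in direction $j$ and a chosen edge in direction $j{+}1$ still cross within both extended segments and in the correct cyclic order; traversing the $k$ chosen segments then turns by the exterior angle $2\pi/k>0$ at each step, so the polygon is convex, and distinct choices use distinct edges and hence give distinct polygons. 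This gives at least $l^k$ convex polygons, and substituting $l=\sqrt{n/(2k)}$ yields $l^k=(n/(2k))^{k/2}=\Theta(n^{k/2})=\Omega(n^{\frac12\lfloor\frac{2\pi}\delta\rfloor})$, since $k=\lfloor\frac{2\pi}\delta\rfloor$ depends only on the constant $\delta$; the vertices added below only increase $n$, and their number was already checked to keep the total below $n$.

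It remains to make the mesh $\delta$-fat and, for the stronger statement, Delaunay. Although the cells of the ring between $Q_i$ and $Q_{i+1}$ are thin, placing $l-i$ evenly spaced Steiner points on each edge of $Q_i$ (a total of $\binom{l-1}{2}$ per sector) lets one cut every such cell into triangles with all angles at least $\delta$; this is exactly where the choice $k=\lfloor\frac{2\pi}\delta\rfloor$ is used, since the sliver-tip angles created at $Q'$ equal $2\pi/k\ge\delta$. The interior of $Q_1$, together with any leftover region, I would then triangulate into $\delta$-fat triangles by an adaptive mesh-generation method using $O(kl)$ further vertices. For the Delaunay claim I would observe that the ring triangles have empty circumcircles — the nested polygons are regular and the Steiner points lie on circles — so all of their edges, in particular every edge used by the $l^k$ convex polygons, already belong to the Delaunay triangulation of the vertex set; re-triangulating only the interior as the Delaunay triangulation keeps the mesh $\delta$-fat, because Delaunay maximises the smallest angle, and destroys no ring edge. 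Hence the Delaunay triangulation of the point set is itself a $\delta$-fat mesh realising the bound.

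The hard part will be the fatness step: one must verify that the subdivided ring \emph{and} the interior jointly admit a triangulation in which \emph{every} angle is at least $\delta$ — not merely at least some smaller universal constant — which requires both the careful placement of the $\binom{l-1}{2}$ Steiner points per sector and the invocation of a mesh generator carrying exactly this guarantee. By comparison, the counting of the $l^k$ convex polygons and the Delaunay argument are routine.
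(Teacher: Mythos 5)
Your proposal follows essentially the same route as the paper: the same construction of nested scaled $k$-gons with extended edges, the same count $l^k=\Theta(n^{k/2})$ with $k=\lfloor 2\pi/\delta\rfloor$ and $l=\sqrt{n/(2k)}$, the same Steiner-point subdivision and adaptive interior meshing for $\delta$-fatness, and the same observation that the ring triangles are Delaunay. In fact you supply slightly more detail than the paper (which states the $l^k$ Claim without proof), and your geometric justification of that Claim and of the angle bound $2\pi/k\ge\delta$ is correct.
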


  \begin {remark}
    The construction described is highly degenerate. If $2\pi$ is not an integer multiple of $\delta$, then this is not essential, since
    the potatoes we count all contain a common point (i.e., they will not use both sides of
    any straight ($180^\circ$) angle). Therefore, we can perturb the resulting mesh slightly,
    and the same asymptotic bound applies to meshes in general position.
    Figure~\ref {fig:lo-fat-perturbed} illustrates the resulting shape.
  \end {remark}

\subsection {Upper bound} \label {sec:fat-up}

  In this section, we assume that $M$ is $\delta$-fat.
We consider paths of edges in $M$, where all edges 
of the path have roughly the same direction.

\begin {lemma}
\label{lem:onepath}
  Let $u, v \in V(M)$ be two vertices, and let $c, d \in S$ be two directions such that $d- c \leq 2\delta$.
  Then there is at most one convex path in $M$ from $u$ to $v$ that uses only directions in $[c, d)$.
\end {lemma}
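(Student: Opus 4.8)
The plan is to argue by contradiction. Suppose there were two \emph{distinct} convex paths $\pi_1 \ne \pi_2$ in $M$ from $u$ to $v$ (distinct as subsets of the plane), both using only edge directions in $[c,d)$. We may assume $M$ contains a triangle, as otherwise $M$ is a forest and there is at most one $u$--$v$ path at all; then $3\delta\le\pi$, so $d-c\le 2\delta<\pi$. Recall also that the edge directions along a convex path are monotone; we may take them to be non-decreasing from $u$ to $v$ for both $\pi_1$ and $\pi_2$.

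First I would reduce the situation to a single ``lens''. Walking along $\pi_1$ from $u$, let $w$ be the last vertex before $\pi_1$ and $\pi_2$ diverge, and $z$ the first vertex thereafter lying on both paths (it exists, since both end at $v$). Set $\sigma_1=\pi_1[w,z]$ and $\sigma_2=\pi_2[w,z]$; these meet only at $w$ and $z$ and, being subgraphs of the plane graph $M$, do not cross, so $\sigma_1\cup\sigma_2$ is a simple closed curve bounding a region $P$. As $\partial P$ consists of edges of $M$, the set $P$ lies in the convex hull of $V(M)$; since $[c,d)$ has length less than $\pi$ the two chains are monotone in a common direction, so $P$ is a ``thin'' monotone region and in particular a union of triangles of $M$. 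It has positive area because $\pi_1$ and $\pi_2$ differ.

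The heart of the proof is to pit two bounds on $\angle_P(w)+\angle_P(z)$ (the interior angles of $P$ at its two tips) against each other. For the \emph{lower bound}: near $w$ the region $P$ is a union of corners at $w$ of triangles of $M$, at least one of which is present; each such corner is at least $\delta$, so $\angle_P(w)\ge\delta$, and likewise $\angle_P(z)\ge\delta$, giving $\angle_P(w)+\angle_P(z)\ge 2\delta$. For the \emph{upper bound} --- the only place the hypothesis $d-c\le 2\delta$ enters --- traverse $\partial P$ counterclockwise: it follows one chain, say $\sigma_i$, forward from $w$ to $z$, then the other backward from $z$ to $w$. Writing $t_j\ge 0$ for the total turn of $\sigma_j$ read from $w$ to $z$, the sum of the signed exterior (turning) angles around the simple polygon $P$ gives
\[
 (\pi-\angle_P(w)) + (\pi-\angle_P(z)) + t_i - t_{3-i} = 2\pi.
\]
Since $\sigma_j$ is a subpath of the convex path $\pi_j$, its edge directions run non-decreasingly from the direction $a_j$ of its first edge to the direction $b_j$ of its last, both in $[c,d)$; as $[c,d)$ spans less than $2\pi$ there is no wrap-around, so $0\le t_j=b_j-a_j\le b_j-c<d-c\le 2\delta$. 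Hence $\angle_P(w)+\angle_P(z)=t_i-t_{3-i}\le t_i<2\delta$, contradicting the lower bound. This contradiction proves the lemma.

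I expect the main obstacle to be making the turning computation airtight: that a counterclockwise traversal of $\partial P$ runs one whole chain forward and the other whole chain backward (it does, since $\partial P$ is the union of exactly two arcs with common endpoints $w$ and $z$); that ``total turn of $\sigma_j$'' telescopes to $b_j-a_j$ with no wrap-around; and that the inequality $t_j<d-c$ is \emph{strict}, which rests on the half-openness of $[c,d)$ and is exactly what is needed in the tight case $d-c=2\delta$. The reduction to the lens, and the fatness lower bound $\angle_P(w),\angle_P(z)\ge\delta$, are routine.
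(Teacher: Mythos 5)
Your proof is correct (under the reading of ``convex path'' that the lemma actually needs; see below), but it is a genuinely different argument from the paper's. The paper is constructive: it splits $[c,d)$ at its bisector $m$ into two intervals of width at most $\delta$, notes that fatness leaves each vertex at most one outgoing edge with direction in $[c,m)$ and at most one in $[m,d)$, and concludes that any convex path must consist of a prefix of the unique $[c,m)$-chain out of $u$ followed by a suffix of the unique $[m,d)$-chain into $v$, so at most one candidate exists. You instead assume two paths, extract the lens $P$ between the divergence vertex $w$ and the reconvergence vertex $z$, and pit the exterior-angle identity $\angle_P(w)+\angle_P(z)=t_i-t_{3-i}<2\delta$ against the fatness bound $\angle_P(w)+\angle_P(z)\ge 2\delta$. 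Your route uses that every bounded face of a mesh is a triangle (so $P$ is tiled by $\delta$-fat triangles and its tip angles are at least $\delta$), while the paper only uses the local fact that edges at a vertex are $\delta$-separated; in exchange, the paper's proof explicitly produces the candidate chain, which is what the subsequent application (extreme vertices in directions $\Gamma_\delta$ determine the polygon) exploits, whereas yours is a cleaner global contradiction.

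Two caveats. The step ``we may take them to be non-decreasing from $u$ to $v$ for both $\pi_1$ and $\pi_2$'' is not a harmless normalization: if ``convex path'' is read as any consistently turning chain, the lemma itself fails --- take a thin quadrilateral $w,a,z,b$ split by the diagonal $ab$ into two fat triangles; then $w\,a\,z$ (left-turning) and $w\,b\,z$ (right-turning) are two convex paths from $w$ to $z$ whose directions all lie in an interval of length well below $2\delta$. So the intended meaning is paths turning in a fixed sense (directions non-decreasing, as along a counterclockwise convex polygon boundary, which is how the lemma is used); the paper's proof tacitly assumes the same (``first use only edges from $[c,m)$ and then switch''), so you are not worse off, but state this as part of the hypothesis rather than as a WLOG, since it is exactly where opposite-turning pairs must be excluded. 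Second, $P$ is a union of triangles of $M$ not because it lies in the convex hull of $V(M)$ (the mesh need not be maximal), but simply because $P$ is bounded, $\partial P$ consists of edges of $M$, and all bounded faces of a mesh are triangles; similarly, the telescoping of the turns needs $d-c<\pi$ (which you did establish), not merely $d-c<2\pi$.
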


  \eenplaatje {up-fat-extreme}
  { Two vertices $u$ and $v$ that need to be extreme in two directions that differ by at most $2 \delta$ (indicated by red and blue) define a unique potential convex chain since there can be at most one edge in each sector. 
    In this figure, the two paths do intersect, but do not form a convex chain.
  }  

\begin {proof}
  Let $m = c + \frac12 (d - c)$ be the direction bisecting $c$ and $d$. 
  Because $M$ is $\delta$-fat, for any vertex in $V(M)$ there is at most one incident edge with outgoing direction in $[c,m)$, and also at most one with direction in $[m,d)$.
  Because the path needs to be convex, it must first use only edges from $[c, m)$ and then switch to only edges from $[m,d)$.
  We can follow the unique path of edges with direction in $[c, m)$ from $u$ and the unique path of edges with direction in $[m + \pi, d + \pi)$ from $v$.
  If these paths intersect the concatenation may be a unique convex path from $u$ to $v$ as desired (clearly, the path is not guaranteed to be convex).
  Figure~\ref {fig:up-fat-extreme} illustrates this.
\end {proof}

  Given a convex polygon $P$ that respects $M$, a vertex $v$ of $P$ is \emph {extreme} in direction $s \in S$ if there are no other vertices of $P$ further in that direction, i.e. if $P$ lies to the left of the line through $v$ with direction $s + \frac12 \pi$.

  Let $\Gamma_\delta = \{0, 2\delta, 4\delta, \ldots, 2\pi\}$ be a set of directions. 
  As an easy corollary of Lemma~\ref{lem:onepath}, the vertices of a convex polygon $P$ respecting $M$ that are extreme in the directions of $\Gamma_\delta$ uniquely define $P$.
  There are at most $n$ choices for each extreme vertex, so the number of convex polygons is at most $n^{|\Gamma_\delta|}$.
  Substituting $|\Gamma_\delta| = \lceil\frac\pi\delta\rceil$ we obtain the following theorem.

\begin {theorem} \label {thm:fat-up}
  Any $\delta$-fat mesh $M$ of size $n$ has at most $O (n^{\lceil\frac\pi\delta\rceil})$ convex polygons that respect $M$.
\end {theorem}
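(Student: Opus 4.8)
The plan is to prove Theorem~\ref{thm:fat-up} by showing that a convex polygon $P$ respecting $M$ is completely determined by a bounded-size ``sketch'': the list of its vertices that are extreme in each of the directions of the fixed finite set $\Gamma_\delta$. Once this determination holds, the counting is immediate: there are $|V(M)| = n$ candidate vertices for each of the $|\Gamma_\delta|$ slots, so the number of convex polygons is at most $n^{|\Gamma_\delta|}$, and substituting $|\Gamma_\delta| = \lceil \pi/\delta \rceil$ (the number of directions in $\{0, 2\delta, 4\delta, \ldots\}$ before wrapping around, since after $\pi$ the extreme vertices in direction $s$ and $s+\pi$ together already pin down a supporting slab) gives the stated $O(n^{\lceil \pi/\delta \rceil})$ bound.

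The heart of the argument is the reconstruction step, and this is where Lemma~\ref{lem:onepath} does the real work. First I would observe that the extreme vertices of $P$ in the directions of $\Gamma_\delta$, taken in cyclic order, partition the boundary of $P$ into convex subchains, one per consecutive pair $(2i\delta, 2(i+1)\delta)$ of directions in $\Gamma_\delta$. A subchain running between the vertex extreme in direction $c = 2i\delta$ and the vertex extreme in direction $d = 2(i+1)\delta$ can only use edge directions lying in the arc $[c, d)$ — any edge of $P$ pointing outside this arc would push a vertex further in direction $c$ or $d$ than the designated extreme vertex, a contradiction. Since consecutive directions in $\Gamma_\delta$ differ by exactly $2\delta$, Lemma~\ref{lem:onepath} applies verbatim: given the two endpoints (which are among the extreme vertices we have recorded) and the direction window $[c,d)$ with $d - c \leq 2\delta$, there is at most one convex path in $M$ realizing that subchain. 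Concatenating the at most $|\Gamma_\delta|$ uniquely determined subchains reconstructs $P$, so the sketch determines $P$.

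A couple of small points need care. One is the wrap-around bookkeeping: $\Gamma_\delta$ as literally written is $\{0, 2\delta, 4\delta, \ldots, 2\pi\}$, but extremality in direction $s$ is the same constraint on the supporting-line side as extremality in $s + \pi$ only after accounting for which side $P$ lies on, so the genuinely independent directions number $\lceil \pi/\delta\rceil$ rather than $\lceil 2\pi/\delta\rceil$; I would spell out that halving explicitly. The second is that some slots in the sketch may coincide (a single vertex of $P$ can be extreme in several consecutive directions when $P$ is ``pointy'' there), but that only reduces the count, so the crude bound $n^{|\Gamma_\delta|}$ still holds — indeed one can even afford to let a slot be ``empty'' or repeated without breaking the upper bound.

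The main obstacle I anticipate is purely expository rather than mathematical: making precise the claim that the extreme-vertex list cuts $\partial P$ into subchains each confined to a $2\delta$-wide direction window, and that this window has the right form $[c,d)$ to invoke Lemma~\ref{lem:onepath} directly. The subtlety is that a vertex of $P$ can be ``shared'' between two adjacent windows, and one must check that the lemma — whose hypothesis is on the directions used, not on strict separation of the chains — still yields uniqueness of the concatenation; but since the lemma already allows the degenerate overlap at its endpoints, this goes through without extra effort. No fatness of $P$ itself is needed, and no general-position assumption on $M$ is needed, consistent with the remark earlier in the paper that all convex sets (including degenerate ones) are counted.
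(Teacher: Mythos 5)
Your core argument is exactly the paper's: record, for each direction in $\Gamma_\delta$, the vertex of $P$ extreme in that direction; observe that the boundary arc between two consecutive extreme vertices uses only edge directions in a window of width at most $2\delta$, so Lemma~\ref{lem:onepath} makes each arc unique; conclude there are at most $n^{|\Gamma_\delta|}=n^{\lceil\pi/\delta\rceil}$ polygons. That reconstruction step and the counting are fine. The one thing you should not write down is your ``halving'' remark. The set $\Gamma_\delta=\{0,2\delta,4\delta,\ldots,2\pi\}$ already has only $\lceil\pi/\delta\rceil$ distinct directions, simply because consecutive elements differ by $2\delta$ while the full circle has length $2\pi$; no identification of antipodal directions is needed to get from $\lceil 2\pi/\delta\rceil$ down to $\lceil\pi/\delta\rceil$, and the identification you propose is in fact false: the vertex of $P$ extreme in direction $s$ and the vertex extreme in $s+\pi$ are in general two different (antipodal) vertices, and both are needed in the sketch. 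If you actually restricted the recorded directions to $[0,\pi]$ as your parenthetical suggests, the subchain of $\partial P$ spanning the unrecorded half of the direction circle would have a direction window of width about $\pi$, Lemma~\ref{lem:onepath} would no longer apply to it, and the reconstruction would fail. So keep the full cyclic family of directions spaced $2\delta$ apart (as your partition into windows $[2i\delta,2(i+1)\delta)$ already does), and simply note that this family has $\lceil\pi/\delta\rceil$ members; with that correction your write-up coincides with the paper's proof.
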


\section {Carrots in fat meshes} \label {sec:car}

Recall that carrots are potatoes that have no interior vertices from the mesh.
So we expect fewer carrots than potatoes. However, our lower bound construction 
for general meshes only has potatoes that are also carrots. In this section we
therefore consider carrots in fat meshes.

\subsection {Lower bound}\label {sec:car-lo}

\tweeplaatjes [scale=1] {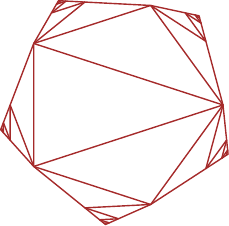} {lo-car-tower}
 { (a) An example of a $\delta$-fat mesh that has $\Omega(n^k)$ carrots.
   (b) A tower of $\delta$-$\delta$-$(\pi-2\delta)$ triangles.
 }

Let $k = \lfloor2\pi/3\delta\rfloor$, and consider a regular $k$-gon $Q$. On each edge of $Q$, we place a triangle with angles $\delta$, $2\delta$, and $\pi-3\delta$. Then, we subdivide each such triangle into $\frac{n-k}k$ smaller triangles with angles $\delta$, $\delta$, and $\pi-2\delta$, as illustrated in Figure~\ref {fig:lo-car-tower}.
Finally, we triangulate the internal region of $Q$ in any way we want, giving a mesh $M$.

\begin {lemma}
  $M$ is convex, $\delta$-fat, and contains $\Omega (n^{\lfloor\frac{2\pi}{3\delta}\rfloor})$ carrots.
\end {lemma}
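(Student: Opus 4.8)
The plan is to verify the three claimed properties of $M$ separately, in increasing order of difficulty. First, \emph{convexity}: the vertices of $M$ are exactly the vertices of the regular $k$-gon $Q$ together with the vertices of the towers erected on its edges. Each tower is built outward from an edge of $Q$, so the convex hull of all vertices is the $k$-gon whose edges are the outer (long) edges of the towers; since every tower triangle lies between $Q$ and this outer polygon, and the interior of $Q$ is triangulated, $M$ covers its convex hull and is hence a maximal mesh. I would spell this out by describing the apex angles: a $\delta$-$2\delta$-$(\pi-3\delta)$ triangle sitting on an edge of $Q$, subdivided into a stack of $\delta$-$\delta$-$(\pi-2\delta)$ triangles, and checking that consecutive towers meet convexly at the vertices of $Q$ (the interior angle of $Q$ is $\pi - 2\pi/k \ge \pi - 3\delta$ by the choice $k = \lfloor 2\pi/3\delta\rfloor$, leaving room for the two $\delta$ base angles of the adjacent tower triangles).

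Second, \emph{$\delta$-fatness}: every tower triangle has angles $\delta$, $\delta$, $\pi-2\delta$ (or $\delta$, $2\delta$, $\pi-3\delta$ for the bottom one), all at least $\delta$ provided $\delta \le \frac{2}{3}\pi$, which holds since $k\ge 3$ forces $2\pi/3\delta > 1$... more carefully, $\delta \in (0,\frac23\pi]$ is assumed throughout. The interior triangulation of $Q$ can be chosen $\delta$-fat because $Q$ is a regular polygon and any fat convex polygon admits a fat triangulation (e.g. fan from the center, or a standard mesh-generation argument as already invoked in Section~\ref{sec:fat-lo}); I would just cite that the interior admits a $\delta$-fat triangulation and note we need only $O(k)$ extra vertices, negligible against $n$.

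Third, the \emph{counting} of carrots, which is the main point. A carrot here is a convex submesh containing no mesh vertex in its interior. I would construct a family of carrots as follows: pick one tower triangle on each of the $k$ edges of $Q$ — there are $\frac{n-k}{k} = \Theta(n/k)$ choices per edge — and take the convex polygon bounded by these $k$ chosen triangles' outer edges, connected around through $Q$'s structure. The key geometric fact is that, because the towers are ``thin'' ($\delta$-$\delta$-$(\pi-2\delta)$ triangles have a $(\pi-2\delta)$ angle at the base, so the two long edges are nearly antiparallel), the outer edge of any chosen tower triangle, extended, together with the interior of $Q$, bounds a convex region whose only interior mesh vertices would be those of the \emph{higher} tower triangles we did \emph{not} choose — so we must actually take the carrot to be the region from the chosen outer edges \emph{inward}, i.e. including all lower tower triangles and the interior of $Q$ but carefully excluding vertices. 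I expect the real obstacle to be arguing that these $\Theta(n/k)^k$ polygons are genuinely (a) convex and (b) carrots, i.e. that no tower vertex and no interior-triangulation vertex lands strictly inside. Convexity should follow from the thin-tower design making the $k$ outer edges turn by roughly $2\pi/k$ at each corner; the carrot (no-interior-vertex) condition should follow because each chosen outer edge is a mesh edge and everything strictly inside it along that tower has already been included, while the interior of $Q$ contributes its own triangulation vertices — here I may need to instead take the carrot to be bounded by the chosen tower edges together with the edges of $Q$ itself (so the interior of $Q$ is \emph{outside} the carrot), which sidesteps the interior-vertex issue entirely and still yields $\Theta(n/k)^k = \Omega(n^{\lfloor 2\pi/3\delta\rfloor})$ distinct convex carrots since $k$ is a constant. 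Finally I would confirm all these polygons are distinct (different choices of tower triangles give different boundaries) and that the total vertex count is $n$ as set up, completing the bound $\Omega(n^{\lfloor\frac{2\pi}{3\delta}\rfloor})$.
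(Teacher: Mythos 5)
Your overall plan for the count (for each edge of $Q$, choose one tower triangle and take everything from the chosen outer edges inward) is in spirit the paper's argument, but your handling of the interior of $Q$ has a genuine gap that breaks the bound. To get $\Theta(n/k)^k$ distinct carrots, every carrot must contain the interior of $Q$: the $k$ tower prefixes are pairwise disjoint regions meeting only at vertices of $Q$, so your fallback — the region bounded by the chosen tower edges together with the edges of $Q$, with $Q$'s interior excluded — is not a single convex polygon but a disconnected union of $k$ caps, and taking one cap at a time yields only $O(n)$ carrots, not $\Omega(n^k)$. At the same time, your way of making the interior $\delta$-fat (a fan from the center, or $O(k)$ Steiner vertices from a mesh generator) plants vertices strictly inside every carrot that does contain $Q$'s interior, so those polygons would fail the outerplanarity condition. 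The paper resolves exactly this tension: the interior of $Q$ is triangulated using only the $k$-gon's own vertices (diagonals only); since these vertices are concyclic, every angle of such a triangulation is a multiple of $\pi/k \ge \frac{3\delta}{2} > \delta$, so no extra vertices are needed for fatness and $M$ has no interior vertices at all. Consequently the interior of $Q$ together with an arbitrary prefix of each tower is convex and automatically a carrot, giving $(\frac{n}{k}-1)^k = \Omega(n^{\lfloor\frac{2\pi}{3\delta}\rfloor})$ carrots (the paper phrases this as counting subtrees of the dual tree, which consists of a central component and $k$ attached paths).

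A smaller but substantive slip in your convexity check: at a vertex $v$ of $Q$ the two adjacent base triangles contribute angles $\delta$ and $2\delta$ (not ``two $\delta$ base angles''), so convexity of $M$ requires $\pi - \frac{2\pi}{k} + 3\delta \le \pi$, i.e. $3\delta \le \frac{2\pi}{k}$ — this is precisely why $k = \lfloor\frac{2\pi}{3\delta}\rfloor$ carries the factor $3$, and it also requires orienting the towers consistently so that no vertex receives $2\delta$ from both sides. Your stated inequality $\pi - \frac{2\pi}{k} \ge \pi - 3\delta$ is in the wrong direction: the choice of $k$ gives $\frac{2\pi}{k} \ge 3\delta$, hence $\pi - \frac{2\pi}{k} \le \pi - 3\delta$, which is what ``leaves room'' for the two base angles.
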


\begin {proof}
  $M$ is convex because $\delta+2\delta \leq \frac{2\pi}k$.
  Every angle in the triangles outside $Q$ is at least $\delta$, and the angles in the interior of $Q$ are multiples of $\frac {\pi}k > \delta$.
  Therefore, every connected subset of $M$ is a carrot.
The dual tree $T$ of $M$ has a central component consisting of $k$ vertices, and then $k$ paths of length $\frac nk-1$. Hence, the number of subtrees of $T$ is at least $(\frac nk-1)^{k}$, which is $\Omega (n^{\lfloor\frac{2\pi}{3\delta}\rfloor})$.
\end {proof}

We conclude:

  \begin {theorem}
    There exists a $\delta$-fat mesh $M$ of size $n$ such that the number of convex outerplanar polygons that respect $M$ is $\Omega (n^{\lfloor\frac{2\pi}{3\delta}\rfloor})$.
  \end {theorem}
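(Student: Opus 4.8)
The plan is to derive the theorem directly from the preceding Lemma: a carrot is by definition a convex outerplanar polygon that respects $M$, and the mesh built just before the Lemma already has (up to rounding) $n$ vertices, so the whole argument is to instantiate that construction with $k=\lfloor 2\pi/3\delta\rfloor$ and take care of the bookkeeping.

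First I would pin down the size. The regular $k$-gon $Q$ contributes $k$ vertices; subdividing the $\delta$-$2\delta$-$(\pi-3\delta)$ triangle on each edge of $Q$ into $\frac{n-k}{k}$ thin $\delta$-$\delta$-$(\pi-2\delta)$ triangles adds $\frac{n-k}{k}$ new vertices per edge; and the interior of $Q$, being a convex polygon, can be triangulated using no new vertices. This gives $k+(n-k)=n$, with the obvious rounding (e.g.\ make a few towers one triangle longer than the rest, which does not change the asymptotics) so that $M$ has exactly $n$ vertices.

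Next I would quote the three conclusions of the Lemma. The mesh $M$ is convex, because the union of $Q$ with its $k$ outward towers is a convex polygon -- this uses $\delta+2\delta\le 2\pi/k$, which is exactly why $k$ is chosen as $\lfloor 2\pi/3\delta\rfloor$. The mesh $M$ is $\delta$-fat, because every angle of an outer triangle lies in $\{\delta,2\delta,\pi-2\delta,\pi-3\delta\}$, each at least $\delta$ in the range where the construction is meaningful (i.e.\ $k\ge 3$), and every angle of an interior triangle is a positive multiple of $\pi/k$, which exceeds $\delta$ since $k<\pi/\delta$. Finally, every connected union of triangles of $M$ is a carrot, so the number of carrots is at least the number of subtrees of the dual tree $T$ of $M$. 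That tree is a spider: a central piece coming from the triangulation of $Q$, with $k$ pendant paths of length $\tfrac nk-O(1)$, one per tower. Choosing, independently for each pendant path, a prefix of it (possibly empty) together with its point of attachment already yields at least $\left(\tfrac nk-O(1)\right)^k$ distinct connected unions, hence at least that many carrots; since $k$ and $\delta$ are constants this is $\Omega(n^k)=\Omega(n^{\lfloor 2\pi/3\delta\rfloor})$.

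The one step that is not pure bookkeeping -- and where I expect the (mild) care to go -- is confirming that these $\left(\tfrac nk\right)^k$ prefix choices give pairwise distinct convex polygons: a priori two different subtrees of $T$ might have the same union. Here, however, the tip of each pendant path corresponds to a vertex that is a true corner of the union, so the multiset of tower lengths, and in fact the whole region, is recovered from the polygon; hence the counted carrots are distinct and the theorem follows.
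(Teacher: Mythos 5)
Your proposal follows essentially the same route as the paper: the same $k$-gon-with-towers construction, the same verification of convexity and $\delta$-fatness, and the same count of $\bigl(\tfrac nk-O(1)\bigr)^k$ carrots via prefixes of the $k$ pendant paths of the dual tree. Your extra care about distinctness and about counting only core-plus-prefix subtrees is a sound (and if anything slightly more precise) rendering of the paper's claim that the relevant connected unions of triangles are carrots.
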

  
  \begin {remark}
    As in the previous section, if $2\pi$ is not an integer multiple of $\delta$, the constructed mesh can be perturbed to be in general position and the same bound holds.
  \end {remark}
  
\subsection {Upper bound}\label {sec:car-up}

In this section, we will show that given any $\delta$-fat mesh $M$, the number of carrots that respect $M$ can be at most $O(n^{\lfloor\frac{2\pi}{3\delta}\rfloor})$.

Consider any carrot. We inspect the dual tree $T$ of the carrot, and make some observations. Each node of $T$ is either a \emph {branch node} (if it has degree $3$), a \emph {path node} (if it has degree $2$), or a \emph {leaf} (if it has degree $1$). Path nodes have one edge on the boundary of the carrot, and leaves have two edges on the boundary of the carrot.
Figure~\ref {fig:up-car-tree} shows an example.

\tweeplaatjes [scale=1.5] {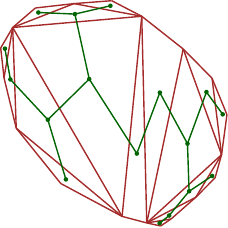} {up-car-core}
{ (a) A carrot and its dual tree.
  (b) The skeleton of the dual tree is the spanning tree of all vertices of degree $2$.
}
\begin {observation} \label {obs:leaf2delta}
  Let $v$ be a leaf node of $T$. The turning angle between the two external edges of $v$ is at least $2\delta$.
\end {observation}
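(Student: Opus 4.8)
The plan is to reduce the turning angle at the apex of $v$'s triangle to an interior angle of a single $\delta$-fat triangle, and then finish with a one-line angle count.

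First I would set up the picture. The leaf node $v$ has degree $1$ in $T$, so the corresponding triangle $t$ of the carrot is adjacent to exactly one other triangle of the carrot: exactly one of the three edges of $t$ is shared with a neighbour, and the other two edges lie on the boundary of the carrot. These two boundary edges are precisely the two external edges of $v$. Since any two edges of a triangle meet in exactly one vertex, they meet at the vertex $a$ of $t$ opposite the shared edge, and the turning angle between them (as one traverses the boundary of the carrot) is the exterior angle of the carrot at $a$, i.e.\ $\pi$ minus the interior angle of the carrot at $a$.

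Next I would observe that $t$ is the \emph{only} triangle of the carrot incident to $a$: both edges of the carrot meeting at $a$ lie on its boundary, and each boundary edge bounds exactly one triangle of the carrot (disjoint triangle interiors, intersections only in a vertex or shared edge), so no further triangle of the carrot can be incident to $a$. Hence the interior angle of the carrot at $a$ equals the angle $\alpha$ of the triangle $t$ at $a$. Finally, since $M$ is $\delta$-fat, the two remaining angles of $t$ are each at least $\delta$ and they sum to $\pi - \alpha$, so $\pi - \alpha \geq 2\delta$; that is, the turning angle between the two external edges of $v$ equals $\pi - \alpha \geq 2\delta$, as claimed. (The bound is tight for a $\delta$-$\delta$-$(\pi-2\delta)$ triangle whose widest angle is the apex, exactly as in the tower of Figure~\ref{fig:lo-car-tower}.)

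The whole argument is elementary, and I do not expect a real obstacle. The only step that needs a moment's care is the claim that no other triangle of the carrot touches $a$, which is where the definition of a carrot (an outerplanar submesh, so its triangles have disjoint interiors and meet properly) is used rather than fatness; everything else is just that the three angles of a triangle sum to $\pi$.
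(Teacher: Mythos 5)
Your proposal is correct and follows essentially the same argument as the paper: the two external edges meet at the apex of the leaf triangle, whose angle is at most $\pi-2\delta$ by $\delta$-fatness, so the turning angle is at least $2\delta$. The paper's proof is just a terser version of this; your extra care about no other carrot triangle touching the apex is fine but not needed, since the turning angle is determined by the two edges themselves.
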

\begin {proof}
  The triangle corresponding to $v$ is $\delta$-fat, so all three angles are at least $\delta$. Therefore, the angles are at most $\pi-2\delta$, and the turning angles are at least $2\delta$.
  Figure~\ref {fig:up-car-2delta} illustrates this.
\end {proof}

\begin {observation} \label {obs:edge3delta}
  Let $v$ be a leaf node of $T$ and $u$ a path node adjacent to $v$.
  The turning angle between the external edge of $u$ and the furthest external edge of $v$ is at least $3\delta$.
\end {observation}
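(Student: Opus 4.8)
The plan is to bound the turning angle by walking from the path node $u$ across its adjacent leaf $v$ to the far external edge of $v$, accumulating two separate angular contributions. First I would set up notation: let $e_u$ be the external edge of $u$, let $t$ be the shared edge of $u$ and $v$ in the dual (a diagonal of the carrot), and let $e_v^1, e_v^2$ be the two external edges of $v$, ordered so that $e_v^2$ is the one further (in turning angle along the boundary) from $e_u$. The boundary of the carrot passes $\ldots, e_u, e_v^1, e_v^2, \ldots$ (or the reverse), so the turning angle from $e_u$ to $e_v^2$ equals the exterior angle of the carrot at the vertex between $e_u$ and $e_v^1$, plus the exterior angle at the vertex between $e_v^1$ and $e_v^2$.

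Next I would lower-bound each of these two exterior angles by $\delta$. The vertex between $e_v^1$ and $e_v^2$ is a vertex of triangle $v$, whose interior angle there is at most $\pi - 2\delta$ by $\delta$-fatness, so that exterior angle is at least $2\delta$ — in fact this is just Observation~\ref{obs:leaf2delta}. For the vertex $x$ between $e_u$ and $e_v^1$: $x$ is a common vertex of triangle $u$ and triangle $v$. In triangle $v$, the angle at $x$ is one of the endpoints of the diagonal $t$, and the interior angle of the carrot at $x$ is the sum of the angle of $u$ at $x$ and the angle of $v$ at $x$; the angle of $v$ at $x$ is at most $\pi - 2\delta$, but more importantly it is at least $\delta$, hmm — I need the carrot's interior angle at $x$ to be at most $\pi - \delta$. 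Since triangle $v$ contributes an angle of at least $\delta$ at $x$ and triangle $u$'s contribution plus $v$'s contribution is the interior angle, and the interior angle is less than $\pi$ (it is a vertex of a convex polygon)\ldots that only gives exterior angle $> 0$. The correct argument: the interior angle of the carrot at $x$ is the angle of triangle $u$ at $x$ plus the angle of triangle $v$ at $x$; triangle $u$'s angle at $x$ is at most $\pi - 2\delta$ and triangle $v$'s angle at $x$ is at least $\delta$ — no. Let me instead argue: the carrot's interior angle at $x$ equals (angle of $u$ at $x$) $+$ (angle of $v$ at $x$), and since $x$ lies on the diagonal $t$ which is an edge of both, the angle of $v$ at $x$ is at most $\pi - 2\delta$, so angle of $u$ at $x$ is at least [interior angle] $-(\pi - 2\delta)$; combined with angle of $u$ at $x \le \pi - 2\delta$ we get interior angle $\le 2(\pi-2\delta)$, useless. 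The clean route is: angle of $v$ at $x$ is at least $\delta$, hence exterior angle at $x$ is at least (angle of $u$ at $x$'s exterior contribution)\ldots I think the intended argument is simply that the exterior angle at $x$ is at least $\delta$ because triangle $u$ alone already forces the boundary to turn by at least $\delta$ there: the interior angle contributed locally\ldots

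Given the subtlety, the key step I would carefully execute is this decomposition and the claim that the carrot turns by at least $\delta$ at the vertex shared by $u$ and $v$. The cleanest justification: let $x$ be that shared vertex; the interior angle of the carrot at $x$ is the sum of the interior angles at $x$ of all carrot-triangles incident to $x$, which includes both $u$ and $v$; each such triangle is $\delta$-fat so contributes at most $\pi - 2\delta < \pi - \delta$, but since there are at least two of them (namely $u$ and $v$) and at least one contributes $\le \pi - 2\delta$ while we only need the total to be $\le \pi - \delta$\ldots the honest statement is that $v$'s angle at $x$ is $\ge \delta$ and the carrot's interior angle at $x$ is $< \pi$, so $u$'s angle at $x$ is $< \pi - \delta$, but that bounds $u$, not the exterior angle. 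I would therefore reformulate: the turning angle from $e_u$ to $e_v^1$ is the exterior angle at $x$, which equals $\pi$ minus (angle of $u$ at $x$) minus (angle of $v$ at $x$); since angle of $v$ at $x \ge \delta$ and angle of $u$ at $x \le \pi - 2\delta$, this is $\ge \pi - (\pi - 2\delta) - \delta$? No: $\ge \pi - (\text{angle of }u) - (\text{angle of }v) \ge \pi - (\pi-2\delta) - (\pi - 2\delta)$, still wrong sign. The right inequality: we want a lower bound on $\pi - a - b$ where $a = $ angle of $u$ at $x$, $b=$ angle of $v$ at $x$; I need upper bounds on $a$ and on $b$ — but $a + b < \pi$ automatically since it is a convex-polygon vertex, so $\pi - a - b > 0$; to get $\ge \delta$ I use that $b \le \pi - 2\delta$ gives nothing and instead use that $b \ge \delta$ forces\ldots I will use: since triangle $v$ is $\delta$-fat, $b \le \pi - 2\delta$, hence $a \le (\pi - b)$ — circular. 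The main obstacle, then, is precisely pinning down why the boundary turns by at least $\delta$ (not just by more than $0$) at the $u$--$v$ junction; I expect the resolution is that $x$ is a vertex of the carrot where triangle $u$ contributes at most $\pi - 2\delta$ and at least one further triangle (such as $v$) contributes at least $\delta$, and the intended claim is actually $2\delta + \delta = 3\delta$ via Observation~\ref{obs:leaf2delta} applied at the $v$-vertex ($\ge 2\delta$) plus an extra $\ge\delta$ from the fact that $u$ and $v$ together at $x$ leave exterior angle $\ge \delta$ because the single triangle $u$ already cannot have angle exceeding $\pi - 2\delta$ there while $v$ occupies at least $\delta$ more. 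Assembling: turning angle $\ge 2\delta + \delta = 3\delta$, which is the claim of Observation~\ref{obs:edge3delta}.
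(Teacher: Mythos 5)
Your decomposition of the turning angle into the exterior angle at the vertex $x$ shared by $e_u$ and $e_v^1$, plus the exterior angle at the apex of $v$ between $e_v^1$ and $e_v^2$, is fine, and the $2\delta$ bound at the apex is indeed Observation~\ref{obs:leaf2delta}. The gap is the other half: you never establish that the carrot turns by at least $\delta$ at $x$, and in fact that pointwise claim is false. The carrot's interior angle at $x$ is $\angle_u(x)+\angle_v(x)$, and the only constraints available are $\angle_u(x)\le\pi-2\delta$, $\delta\le\angle_v(x)\le\pi-2\delta$, and convexity; these permit, for example, $\angle_u(x)=\pi-2\delta$ and $\angle_v(x)=\tfrac32\delta$, giving an exterior angle of only $\tfrac12\delta$ at $x$, with all triangles $\delta$-fat and the carrot still convex (the missing turn reappears at the apex of $v$, whose exterior angle is then $\tfrac52\delta$, so the total is still $3\delta$). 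Hence the bound cannot be obtained by bounding the two exterior angles separately, and your closing sentence simply asserts the very step you had repeatedly (and correctly) flagged as unproven.

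The paper's argument sidesteps per-vertex bounds. Consider the quadrilateral $u\cup v$ and let $f$ be the edge of $u$ that separates it from the rest of $T$. Since the interior angles of this (convex) quadrilateral sum to $2\pi$, the turning angle from $e_u$ to the far external edge $e_v^2$, which is the sum of the exterior angles at $x$ and at the apex of $v$, equals the sum of the quadrilateral's interior angles at the two endpoints of $f$. One endpoint of $f$ is a vertex of both $u$ and $v$ and contributes at least $\delta+\delta=2\delta$; the other endpoint is a vertex of $u$ only and contributes at least $\delta$. The total is therefore at least $3\delta$, which is exactly the terse statement in the paper's proof that $f$ ``has two $\delta$-fat triangles incident to one of its endpoints, and one to its other endpoint.'' If you want to keep your two-vertex decomposition, you must bound the two exterior angles jointly in this way rather than individually.
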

\begin {proof}
  Consider the quadrilateral formed by the two triangles of $u$ and $v$.
  The edge in $M$ separating $u$ from the rest of $T$ has two $\delta$-fat 
triangles incident to one of its endpoints, and one to its other endpoint.
This means that the turning angle between the edges in the observation is $\geq 3\delta$
  (Figure~\ref {fig:up-car-3delta}).
\end {proof}

\tweeplaatjes {up-car-2delta} {up-car-3delta}
{(a) Every leaf causes a turning angle of $2\delta$.
 (b) Every leaf who is an only child causes a turning angle of $3\delta$.
}

By Observation~\ref {obs:leaf2delta}, the number of leaves in a carrot is bounded by $\lfloor\frac\pi\delta\rfloor$, and therefore, also the number of branch nodes is bounded by $\lfloor\frac\pi\delta\rfloor-2$. However, the number of path nodes can be unbounded.
Consider subtree $S$ of $T$ that is the spanning tree of all the path nodes. We call $S$ the \emph {skeleton} of the carrot. Figure~\ref {fig:up-car-core} shows an example.
By Observation~\ref {obs:edge3delta}, the number of leaves of $S$ is bounded by $\lfloor\frac{2\pi}{3\delta}\rfloor$.

We will charge the carrot to the set of leaves of $S$, and we will argue that every set of $\lfloor\frac{2\pi}{3\delta}\rfloor$ triangles in $M$ is charged only constantly often (for constant $\delta$).

\begin {observation} \label {obs:skeleton}
  Let $\Delta$ be any set of triangles of $M$.
  If there exists a carrot that contains all triangles in $\Delta$, then there is a unique smallest such carrot.
\end {observation}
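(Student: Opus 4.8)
The plan is to characterize the unique smallest carrot directly in terms of the dual tree of the triangulation $M$. Recall that a carrot corresponds to a set of triangles of $M$ whose union is convex and contains no vertex of $M$ in its interior; equivalently, it corresponds to a connected subtree of the dual tree $T_M$ of $M$ whose union is convex and whose boundary passes through no interior vertex of $M$. The key observation I would establish first is a \emph{closure under intersection} property: if $C_1$ and $C_2$ are two carrots (viewed as subsets of triangles of $M$) that both contain $\Delta$, then the set of triangles lying in \emph{both} $C_1$ and $C_2$ is again a carrot containing $\Delta$. Given this, the intersection of all carrots containing $\Delta$ is itself a carrot containing $\Delta$, and it is by construction contained in every such carrot, hence it is the unique smallest one.

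So the work is in proving the closure property. First I would note that the triangles common to $C_1$ and $C_2$ form a connected set: both $C_1$ and $C_2$ are connected in $T_M$ and both contain $\Delta$, and since they are both convex their union is convex and their intersection is the intersection of two convex regions tiled by triangles of $M$ — intersecting convex sets along triangle boundaries keeps the tile set connected (any path through $\Delta$ inside $C_1 \cap C_2$ exists because a straight segment between two points of the intersection stays in the intersection by convexity). Second, the intersection of two convex sets is convex, so the union of the common triangles is convex. Third, it contains no interior vertex of $M$: if a vertex $v$ of $M$ were interior to $C_1\cap C_2$, then since $C_1 \cap C_2 \subseteq C_1$, $v$ would be interior to $C_1$ too (it has a neighbourhood inside $C_1\cap C_2 \subseteq C_1$), contradicting that $C_1$ is a carrot. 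Hence $C_1 \cap C_2$ is a carrot, and it clearly still contains $\Delta$.

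The one subtlety I expect to be the main obstacle is making precise the claim that ``$C_1 \cap C_2$, as a set of triangles, tiles the convex region $\mathrm{conv}(C_1) \cap \mathrm{conv}(C_2)$'' — i.e., that intersecting the two triangulated convex regions does not create a region that is only partially covered by whole triangles of $M$. This holds because $M$ is a single fixed triangulation: every triangle of $M$ is either entirely inside a given carrot or entirely outside it (carrots are unions of triangles of $M$, and the boundary of a carrot consists of edges of $M$), so a triangle of $M$ lies in $C_1 \cap C_2$ iff it lies in $C_1$ and in $C_2$, and the union of these triangles is exactly $\mathrm{conv}(C_1) \cap \mathrm{conv}(C_2)$ — no triangle is split. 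Once this is spelled out, convexity and the interior-vertex condition follow as above, and the theorem is immediate: the intersection of the (finitely many, nonempty) carrots containing $\Delta$ is the desired unique smallest carrot.
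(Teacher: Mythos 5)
Your argument is correct, but note that the paper itself states Observation~\ref{obs:skeleton} without any proof, so there is no ``paper proof'' to match; your intersection-closure argument is a legitimate way to supply one. The key point you identify --- that a triangle of $M$ is either wholly inside or wholly outside each carrot, so the common triangles are exactly the triangles contained in $R_1\cap R_2$ (writing $R_i$ for the region covered by carrot $C_i$) --- is right, though as stated it only gives one inclusion; to see that the common triangles \emph{cover} all of $R_1\cap R_2$ you still need a short extra step, e.g.: every interior point of $R_1\cap R_2$ not lying on an edge of $M$ lies in the interior of a unique mesh triangle, which then belongs to both $C_1$ and $C_2$; such points are dense in $R_1\cap R_2$ (which has nonempty interior because both carrots contain the nonempty set $\Delta$), and the union of the common triangles is closed, so it equals $R_1\cap R_2$. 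With that, convexity of the intersection region, the fact that its boundary consists of complete edges of $M$, and the absence of interior mesh vertices (inherited from $C_1$) all follow as you say, and the unique smallest carrot is the intersection of the finitely many carrots containing $\Delta$. Two minor caveats: the statement (and your closure lemma) implicitly needs $\Delta\neq\emptyset$ --- for $\Delta=\emptyset$ two disjoint carrots intersect in something that is not a carrot and there is no smallest one --- and ``smallest'' should be read as ``contained in every carrot containing $\Delta$,'' which is exactly what your construction delivers; both points are consistent with how the paper uses the observation.
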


\begin {lemma}
  Let $\Delta$ be any set of triangles in $M$.
  The number of carrots that charge $\Delta$ is at most $2^{\lfloor\frac{2\pi}\delta\rfloor}$.
\end {lemma}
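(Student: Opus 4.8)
The plan is to show that a carrot is essentially determined by the set $\Delta$ of triangles it charges, up to a bounded number of choices. Recall that the carrot is charged to the set of leaves of its skeleton $S$; so fix $\Delta$ to be such a leaf set, with $|\Delta| \le \lfloor\frac{2\pi}{3\delta}\rfloor =: k$. By Observation~\ref{obs:skeleton}, among all carrots containing $\Delta$ there is a unique smallest one, call it $C_0$. The key claim will be that any carrot $C$ that charges $\Delta$ is obtained from $C_0$ by deleting triangles only in a controlled way, and hence there are at most $2^{\lfloor\frac{2\pi}\delta\rfloor}$ of them.

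First I would analyze the structure of the skeleton. The skeleton $S$ of a carrot is a tree whose leaves are exactly the triangles in $\Delta$ (the path nodes of $T$ that are ``ends'' of the skeleton), and whose internal nodes are branch nodes of $T$. Since the number of leaves of $S$ is at most $k$ and the number of branch nodes of $T$ is at most $\lfloor\frac\pi\delta\rfloor - 2$, the skeleton has bounded combinatorial size: at most $k$ leaves and at most $\lfloor\frac\pi\delta\rfloor-2$ internal vertices. Crucially, once $\Delta$ is fixed, the path in $T$ (and hence in $M$) between any two triangles of $\Delta$ is forced — it is just the unique path in the dual tree of $M$ — so the skeleton $S$ itself, as a subtree of $M$'s dual, is completely determined by $\Delta$. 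Thus the only freedom in reconstructing $C$ lies in the subtrees of $T$ hanging off the skeleton: at each path node or branch node of $S$, the carrot may or may not extend outward via additional triangles (the leaves and path nodes of $T$ that are not in $S$).

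Next I would bound that remaining freedom. Any triangle of $C$ not on the skeleton lies in a ``pendant'' chain attached to a node of $S$; and by Observation~\ref{obs:leaf2delta} each maximal such pendant chain ends in a leaf of $T$, contributing a turning angle of at least $2\delta$ to the carrot's boundary. Since the total turning is $2\pi$, there are at most $\lfloor\frac{2\pi}{2\delta}\rfloor = \lfloor\frac\pi\delta\rfloor$ such pendant chains — but more carefully, the pendant chains plus the $\Delta$-leaves together account for all leaves of $T$, of which there are at most $\lfloor\frac\pi\delta\rfloor$. The skeleton tells us which nodes a pendant chain may attach to and, since a pendant chain is a path in $T$, once we know its attachment point and which ``side'' it goes to, the chain is again forced to be the unique such path in $M$ extended maximally. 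So a carrot charging $\Delta$ is specified by choosing, for each of the at most $\lfloor\frac{2\pi}\delta\rfloor$ possible pendant ``slots'' around the skeleton, whether or not it is present — at most $2^{\lfloor\frac{2\pi}\delta\rfloor}$ possibilities. (Here one uses fatness once more to ensure the slots are genuinely bounded in number: each present pendant chain must fit in an angular sector of the boundary of size $\geq 2\delta$, and the $2\pi$ of total turning around a convex polygon bounds the count.)

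The main obstacle I anticipate is making precise the claim that the pendant chains are ``forced'' and that their attachment slots number at most $\lfloor\frac{2\pi}\delta\rfloor$. The subtlety is that a pendant chain is not merely a leaf of $T$ but a whole path of path-nodes-of-$T$ terminating in a leaf, and one must argue that (i) its attachment point to $S$ is determined by $\Delta$ together with the binary choices already made, (ii) the chain within $M$ is unique given its attachment point and direction — which follows because the dual of $M$ is a tree, so paths are unique — and (iii) that the chain extends \emph{maximally} in a minimal carrot, so only its presence/absence, not its length, is a free choice when comparing against $C_0$. Once these are nailed down, counting the slots reduces to a turning-angle argument like Observation~\ref{obs:edge3delta}/\ref{obs:leaf2delta}: each pendant chain eats at least $2\delta$ of the boundary turning, giving at most $\lfloor\frac{2\pi}\delta\rfloor$ of them, hence at most $2^{\lfloor\frac{2\pi}\delta\rfloor}$ carrots charging $\Delta$, which is the desired bound.
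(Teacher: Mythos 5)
Your outline follows the paper's strategy in broad strokes (fix $\Delta$, use Observation~\ref{obs:skeleton} to pin down the skeleton $S$, then count binary choices of pendant additions via a turning-angle argument), but there is a genuine gap at the decisive step, and the way you propose to close it would not work. You allow the triangles of $C$ outside $S$ to form pendant \emph{chains} containing path nodes of $T$, and you try to control these by claiming that (ii) the chain is unique because ``the dual of $M$ is a tree'' and (iii) the chain must extend maximally. Both claims are false: the dual of the full mesh $M$ is not a tree (only the dual of a carrot, i.e.\ an outerplanar submesh, is), and nothing forces an arbitrary carrot charging $\Delta$ to extend its pendant parts maximally --- minimality is only guaranteed for the smallest carrot containing $\Delta$, not for $C$. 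If pendant chains of path nodes of arbitrary length were admissible, a single ``slot'' could host many distinct carrots, and the $2^{\#\text{slots}}$ count you end with would not bound the number of carrots charging $\Delta$.

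The fact you actually need, and which the paper's proof establishes first, is that a carrot charging $\Delta$ contains \emph{no path node of $T$ outside $S$ at all}: any such path node would belong to the carrot's skeleton (the spanning tree of its path nodes), that skeleton would then be strictly larger than $S$ and have a leaf not in $\Delta$, so the carrot would charge a different set. Consequently, only components made of leaves and branch nodes of $T$ can be attached to $S$; each attached leaf consumes turning angle at least $2\delta$ by Observation~\ref{obs:leaf2delta}, so there are at most $\lfloor 2\pi/\delta\rfloor$ potential attachment choices and hence at most $2^{\lfloor 2\pi/\delta\rfloor}$ carrots. Your final turning-angle count is the right idea, but without the exclusion of path nodes outside $S$ it does not yield the lemma. (A further small slip: the internal nodes of the skeleton need not be branch nodes of $T$; they are simply all nodes on paths between the skeleton's leaves, most of which are path nodes.)
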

\begin {proof}
  Consider the tree $S$ that is the dual of the unique smallest carrot that contains $\Delta$, as per Observation~\ref {obs:skeleton}. Any carrot that charges $\Delta$ has $S$ as its skeleton.
  
  First, we argue that the set of path nodes in any carrot that charges $\Delta$ is a subset of $S$.
  Indeed, if there was any path node in $T$ outside $S$, then there would be at least one leaf component of $T$ that is disconnected from $S$, and there would be an edge outside $\Delta$ that gets charged by the carrot of $T$.
  Therefore, only branch nodes and leaves can still be added to $S$ to obtain a carrot that charges~$\Delta$.
    
  Then, we argue that there are at most $2^{\lfloor\frac{2\pi}\delta\rfloor}$ other nodes that can be part of a carrot that charges $\Delta$.
  We can augment $S$ by adding on components consisting of only $k$ leaves and $k-1$ branch nodes.
  By Observation~\ref {obs:leaf2delta}, each such component consumes a turning angle of $2k\delta$. Therefore, they can only be added on edges of $S$ which have a cap angle of at least $2k\delta$. Therefore, there can be at most $2\pi/\delta$ potential leaves, leading to $2\lfloor\frac{2\pi}\delta\rfloor$ choices.\footnote {Not all potential leaves can be chosen independently, but we ignore this issue since the factor is dominated by the dependency on $n$ anyway.}
\end {proof}

We conclude:

\begin {theorem}
  Any $\delta$-fat mesh $M$ of size $n$ has at most $O(n^{\lfloor\frac{2\pi}{3\delta}\rfloor})$ convex outerplanar polygons that respect $M$.
\end {theorem}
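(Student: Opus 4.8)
The plan is to combine the two halves of the argument that have already been set up: the structural decomposition of a carrot's dual tree into its skeleton plus attached leaf-and-branch components, and the charging scheme that maps each carrot to the leaf set of its skeleton. Concretely, I would argue as follows. First, recall from Observation~\ref{obs:edge3delta} (via the $3\delta$ turning-angle argument) that the skeleton $S$ of any carrot has at most $\lfloor\frac{2\pi}{3\delta}\rfloor$ leaves. A tree is determined up to the obvious sense by its leaf set together with its internal branching structure, but here the relevant point is weaker: $S$ is a subtree of the dual tree $T$ of $M$, and a subtree of $T$ whose leaves are a fixed set of at most $k := \lfloor\frac{2\pi}{3\delta}\rfloor$ triangles of $M$ is \emph{uniquely determined} — because in a tree there is a unique minimal connected subgraph (Steiner tree) spanning a given vertex set, and the skeleton, consisting exactly of the path nodes, must be precisely that minimal subtree (any extra path node would create a disconnected leaf component charging an un-charged edge, exactly as in the proof of the preceding lemma). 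Hence the number of possible skeletons is at most the number of ways to choose $k$ triangles of $M$, which is $O(n^{k})$.

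Next I would invoke the already-proved lemma stating that each set $\Delta$ of $k$ triangles is charged by at most $2^{\lfloor\frac{2\pi}{\delta}\rfloor}$ carrots: once the skeleton is pinned down, only branch-and-leaf components can be hung off its edges, each such component of $j$ leaves eats $2j\delta$ of the available turning angle at the edge it is attached to, the total turning angle is $2\pi$, so only a bounded number of potential additional leaves exist, giving a bounded ($2^{\lfloor 2\pi/\delta\rfloor}$) number of carrots per skeleton-leaf-set. Multiplying the $O(n^{k})$ bound on leaf sets by this constant yields $O(n^{\lfloor\frac{2\pi}{3\delta}\rfloor})$ carrots in total, since $\delta$ is a fixed constant and the $2^{\lfloor 2\pi/\delta\rfloor}$ factor is absorbed into the big-$O$.

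The step I expect to be the main obstacle is making the claim ``the leaf set of the skeleton determines the skeleton'' fully rigorous, i.e. establishing Observation~\ref{obs:skeleton} and the uniqueness of the minimal carrot cleanly. One has to be careful that ``charging'' is well-defined: a carrot is charged to the leaf set of \emph{its} skeleton, and one must verify that if two carrots have the same skeleton-leaf-set then they have the same skeleton (this uses the Steiner-tree uniqueness in the tree $T$, plus the fact that path nodes are exactly the skeleton nodes) and only then differ in a bounded way. A secondary subtlety is the footnoted remark that the potential leaves cannot all be chosen independently — but as the paper notes this only improves the constant and can be ignored. Once the determinacy of the skeleton from its leaf set is nailed down, the rest is just assembling the $O(n^k)$ choices with the constant multiplicity, so I would spend most of the write-up on that determinacy argument and treat the counting as routine.
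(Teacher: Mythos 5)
Your overall plan is the paper's own: bound the number of skeleton leaves by $k=\lfloor\frac{2\pi}{3\delta}\rfloor$ via Observation~\ref{obs:edge3delta}, charge each carrot to the leaf set of its skeleton, show that each set of at most $k$ triangles is charged only $O(1)$ times, and multiply this constant by the $O(n^{k})$ choices of such a set; assembled with the preceding charging lemma, this is exactly how the theorem is concluded in the paper, and the counting itself is fine.

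The step you single out as the main obstacle is, however, the one place where your proposed justification is wrong as stated. You argue that the skeleton is determined by its leaf set because it is the unique Steiner tree of that set ``in the dual tree $T$ of $M$''. But the dual graph of the mesh $M$ is not a tree in general: every interior vertex of $M$ is surrounded by a cycle of triangles, so the dual has cycles. Only the dual of a single carrot is a tree (precisely because a carrot is outerplanar), and two different carrots charging the same leaf set have different dual trees, so there is no ambient tree in which Steiner-tree uniqueness can be invoked. The paper obtains the needed determinacy geometrically instead, via Observation~\ref{obs:skeleton}: for a set $\Delta$ of triangles there is a unique smallest carrot containing $\Delta$ (carrots containing $\Delta$ can be intersected, since convexity and outerplanarity are preserved), and the proof of the charging lemma then shows that the path nodes of any carrot charging $\Delta$ lie in the dual $S$ of this smallest carrot, so all such carrots share the skeleton $S$. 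If you replace your Steiner-tree step by this argument (which you do cite but propose to bypass), the rest of your write-up --- the $2^{\lfloor\frac{2\pi}{\delta}\rfloor}$ multiplicity per leaf set and the $O(n^{k})$ count of leaf sets --- goes through exactly as in the paper.
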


\section {Carrots in compact fat meshes} \label {sec:com}

When the mesh is not only fat, but the edge length ratio
is also bounded by a constant, we can prove different bounds.
We call such meshes compact fat.
The combination of fat triangles and an edge length ratio bound
implies that all triangles have the same area, up to a constant
factor. For simplicity we assume that all edges have constant length
and therefore all triangles have constant area. The following
lemma is easy to show (see also Moet et al.~\cite{mks}):

\begin{lemma}
Given a compact fat mesh and a line segment $s$ of length
$d$, the number of triangles intersecting $s$ is $O(d)$.
\end{lemma}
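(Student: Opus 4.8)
The plan is to exploit the two defining features of a compact fat mesh: every triangle has area bounded below by a constant (call it $A_{\min}$), and every edge has length bounded above by a constant (call it $\ell_{\max}$). The key point is that a triangle that intersects the segment $s$ cannot wander arbitrarily far from $s$, because its diameter is at most $\ell_{\max}$; hence all such triangles lie inside the region $R$ consisting of points within distance $\ell_{\max}$ of $s$. This region is a ``stadium'' (a rectangle of dimensions $d \times 2\ell_{\max}$ capped by two half-disks of radius $\ell_{\max}$), so its area is $2 d\,\ell_{\max} + \pi \ell_{\max}^2 = O(d)$, using that $\ell_{\max}$ is a constant and (for the bound to be meaningful) $d \geq \ell_{\max}$, or else absorbing the additive constant.

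The main steps, in order, are: first, establish that the ratio-plus-fatness hypotheses give a uniform lower bound $A_{\min} > 0$ on triangle areas and a uniform upper bound $\ell_{\max}$ on edge lengths (this is the ``all triangles have the same area up to a constant factor'' remark, and is where the compactness constant $\rho$ and the fatness constant $\delta$ enter); second, observe that any triangle $t$ of the mesh that intersects $s$ is contained in the stadium $R$ of radius $\ell_{\max}$ around $s$, because $t$ has a point on $s$ and diameter $\leq \ell_{\max}$; third, use the disjoint-interiors property of a mesh to conclude that the triangles intersecting $s$ have pairwise disjoint interiors and all lie in $R$, so their number is at most $\mathrm{area}(R)/A_{\min}$; fourth, bound $\mathrm{area}(R) = 2d\,\ell_{\max} + \pi \ell_{\max}^2 = O(d)$ and divide by the constant $A_{\min}$ to get $O(d)$.

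A small subtlety worth addressing explicitly: the claim as stated counts triangles intersecting $s$, including those that merely touch $s$ in a single boundary point or along a sub-edge; the disjoint-interiors bound still applies since we are packing interiors into $R$ and each contributes area $\geq A_{\min}$ regardless of how it meets $s$. Another subtlety is the regime $d < \ell_{\max}$: there the stadium area is $O(\ell_{\max}^2) = O(1)$, which is still $O(d)$ only if we allow the implied constant to absorb an additive term, so I would phrase the conclusion as ``$O(d + 1)$'' or simply note that for $d$ bounded away from $0$ (the only interesting case) the bound is $O(d)$; in the application $s$ will always be long, so this is immaterial.

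I expect no serious obstacle here — the lemma is genuinely routine, which is why the paper flags it as ``easy to show'' and cites~\cite{mks}. The only place requiring a tiny bit of care is making the constants explicit in terms of $\rho$ and $\delta$: from fatness, a triangle with longest edge $\ell$ has all edges at least (a constant times) $\ell$ and area at least a constant times $\ell^2$; combined with the edge-ratio bound $\rho$, if some edge in the mesh has length $\ell_0$, then every edge has length in $[\ell_0/\rho, \ell_0]$ (up to normalization we set $\ell_0$ so that lengths are $\Theta(1)$), and then every triangle has area $\Theta(1)$. Once those constants are pinned down, the packing argument is immediate.
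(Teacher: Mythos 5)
Your packing argument is correct, and since the paper gives no proof of this lemma (it is stated as ``easy to show'' with a citation to Moet et al.), your argument is exactly the standard one intended: triangles meeting $s$ lie in a stadium of area $O(d)$ around $s$, have pairwise disjoint interiors, and each has area bounded below by a constant derived from fatness plus the edge-length normalization. Your caveats about the $d<\ell_{\max}$ regime (read the bound as $O(d+1)$) and about triangles that merely touch $s$ are appropriate and do not affect the conclusion.
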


\subsection {Lower bound}\label {sec:com-lo}

We distinguish fat carrots and general carrots in compact fat meshes.
The trivial lower bound for fat carrots is $\Omega(n)$, for example
each separate triangle of the mesh is a fat carrot.

The simple lower bound for general carrots is quadratic:
we place the $n$ points on a $2\times n/2$ grid and triangulate the 
row of squares. Clearly there are $\Omega(n^2)$ carrots.

\subsection {Upper bound}\label {sec:com-up}

Let $C$ be a carrot and let $pq$ be its diameter. Then all
triangles properly intersecting $pq$ are part of $C$ by convexity.
All vertices of these triangles lie on the boundary of $C$, since
carrots do not have interior vertices. 
Since these vertices are at constant distance from each other, a longer
diameter implies that $C$ becomes less fat.

\begin {observation}
  For fat carrots, the diameter has constant length. 
\end {observation}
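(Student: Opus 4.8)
The plan is to show that a fat carrot $C$ has a diameter $pq$ of length $O(1)$, which then makes $C$ fit inside a disk of constant radius and contain a disk of radius $\Omega(1)$ (the inscribed disk of any single triangle it contains), giving the constant bound directly from the definition of a fat vegetable. So the only real work is bounding $|pq|$.

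First I would recall the setup from the start of Section~\ref{sec:com}: in a compact fat mesh all edges have (up to constants) the same length, hence all triangles have constant area, and by the preceding lemma any segment of length $d$ meets $O(d)$ triangles. Let $C$ be a $\gamma$-fat carrot with diameter $pq$, $d = |pq|$. By convexity every triangle of $M$ that properly crosses $pq$ lies inside $C$; by the lemma there are $\Theta(d)$ such triangles, and since $C$ has no interior mesh vertices, all their vertices lie on $\partial C$. Now I would argue that $C$ must then be ``thin'': consider the two vertices of these crossing triangles that are extreme in the direction perpendicular to $pq$, one on each side of the line through $pq$ (or, if all are on one side, take $p$, $q$ themselves on the other). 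These lie on $\partial C$, so the width of $C$ in the direction of $pq$ is at least $d$ (it is the diameter direction up to constants — actually the width along $pq$ is at most $d$ and at least the projection, so I should instead bound the width \emph{perpendicular} to $pq$ from above). The cleaner route: the inscribed disk of $C$ has radius $r$; a disk of radius $r$ inside $C$ forces a segment of length $2r$ of $pq$ near the center of $C$ to be at distance $\ge r$ from both endpoints' worth of boundary, i.e.\ $C$ contains a $2r \times$-something region, so $C$ has area $\Omega(r d)$. On the other hand every triangle inside $C$ has area $\Theta(1)$ and there are $O(n)$ of them, but more usefully: $\gamma$-fatness says the circumscribed disk has radius $O(r)$, hence $d \le O(r)$, while the area of $C$ is $O(r^2)$. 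Combining area $\Omega(rd)$ with area $O(r^2)$ only gives $d = O(r) = O(d)$, which is vacuous, so I need the compactness more forcefully.

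The argument I would actually commit to: fatness gives $d = \Theta(r)$ where $r$ is the inradius, so it suffices to bound $r$. The inscribed disk $D$ of radius $r$ lies in $C$, and $C$ contains no interior mesh vertices, so $D$'s interior contains no mesh vertices either; but in a compact fat mesh every disk of radius $\rho$ contains $\Omega(\rho^2)$ mesh vertices once $\rho$ exceeds a constant (pack $\Theta(\rho^2)$ triangles of constant area, each contributing vertices). Hence $r = O(1)$, so $d = \Theta(r) = O(1)$. That a disk of large radius inside a compact fat mesh must contain a mesh vertex is the crux and follows because a vertex-free disk could be crossed by a triangle of $M$ entirely (impossible, since the disk's diameter exceeds the bounded edge length once $r$ is large enough), so the disk lies within a single triangle — contradicting that triangles have constant diameter.

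The main obstacle is making precise the claim ``a large disk inside a fat carrot must contain a mesh vertex,'' i.e.\ correctly quantifying how the compact-fat assumption forbids large empty regions; the rest (fatness converting an inradius bound into a diameter bound, and a diameter bound into the ``constant length'' conclusion) is routine. One should also be slightly careful that the carrot need not be maximal-mesh-filling, but since every bounded face of $M$ inside $C$ is a triangle of $M$ and $C$ is a union of such triangles, the counting of vertices and triangles inside $C$ goes through as stated.
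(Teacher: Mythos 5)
Your committed argument is correct and essentially the same as the paper's: both combine the carrot's lack of interior mesh vertices with the constant edge length of a compact fat mesh to force the carrot to have constant inradius, and then use fatness to turn that into a constant bound on the diameter. The paper certifies thinness via the vertices of the triangles crossing the diameter $pq$ lying on the boundary at constant distance from $pq$, while you use the inscribed disk directly; your flagged crux is made precise exactly as you suggest (the center of the inscribed disk lies in some triangle of the carrot, whose vertices are within the maximum edge length of the center and would therefore be interior to the carrot if the inradius exceeded that length).
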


There are $O(1)$ pairs of vertices in a
compact fat mesh whose distance is a constant. For each such pair,
all fat carrots that have this pair as the diameter lie inside a
region of constant area (Figure~\ref{fig:compactfatfatcarrots}), 
and hence contains $O(1)$ triangles of the mesh by a packing argument. 
A fat carrot is some subset of these triangles. Hence, the total number 
of fat carrots in a compact fat mesh is $O(n)$.

\tweeplaatjes {compactfatfatcarrots} {basecarrotouteredge}
{(a) The region where triangles can be for a carrot with diameter $pq$.
 (b) A base carrot and extension possibilities on an outer edge $e$.}
  
\begin {theorem}
Any compact fat mesh $M$ of size $n$ has at most $O(n)$ convex fat outerplanar polygons that respect $M$.
\end {theorem}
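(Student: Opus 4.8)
The plan is to show that every fat carrot has constant diameter, and then to enumerate the fat carrots by the pair of mesh vertices that realises their diameter.

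First I would establish the \emph{Observation} that a fat carrot $C$ has diameter $O(1)$. Let $pq$ be the diameter of $C$; since $C$ is a convex polygon respecting $M$, both $p$ and $q$ are vertices of $M$. On the one hand, the smallest disk containing $C$ has radius at least $|pq|/2$, so by $\gamma$-fatness $C$ contains an inscribed disk of radius at least $|pq|/(2\gamma)$; hence the area of $C$ is $\Omega(|pq|^2)$, and since $C$ is a union of triangles of $M$, each of area $\Theta(1)$, $C$ consists of $\Omega(|pq|^2)$ triangles. On the other hand, $C$ is a carrot, so it has no interior vertices of $M$, and every vertex of every triangle of $C$ lies on $\partial C$. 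The curve $\partial C$ is convex of length at most $\pi|pq|$, and consecutive mesh vertices on it are $\Omega(1)$ apart (all edges of a compact fat mesh have length $\Theta(1)$), so $\partial C$ carries only $O(|pq|)$ vertices of $M$. Since a triangulated convex polygon with $h$ boundary vertices and no interior vertices has exactly $h-2$ triangles, $C$ consists of $O(|pq|)$ triangles. Comparing the two counts forces $|pq| = O(1)$.

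Next I would count the candidate diameter pairs. By the Observation the diameter of any fat carrot is at most some constant $D$; by a packing argument in the compact fat mesh each vertex of $M$ has only $O(1)$ other vertices within distance $D$, so there are only $O(n)$ pairs $(p,q)$ of mesh vertices that can serve as the diameter of a fat carrot.

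Finally, I would fix such a pair $(p,q)$ and bound the number of fat carrots with this diameter. Every convex set of diameter $pq$ lies in the intersection of the two disks of radius $|pq|$ centred at $p$ and at $q$, a region of constant area and constant diameter; by a packing argument (invoking the lemma of this section on its boundary for the triangles that cross out of it) this region meets only $O(1)$ triangles of $M$, and a carrot with diameter $pq$ is a subset of them, so at most $2^{O(1)} = O(1)$ fat carrots have diameter $pq$. Summing over the $O(n)$ candidate pairs yields the claimed $O(n)$ bound. I expect the main obstacle to be the first step: it is the only place that uses fatness of the carrot itself rather than of the mesh, and it hinges on playing the area lower bound coming from the inscribed disk against the upper bound on the number of triangles coming from the carrot's boundary being a short convex curve through $\Omega(1)$-separated vertices; everything afterwards is routine packing.
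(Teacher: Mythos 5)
Your proposal is correct and follows essentially the same route as the paper: establish that a fat carrot has constant diameter, charge each fat carrot to the $O(n)$ candidate diameter pairs (each vertex has $O(1)$ vertices at constant distance), and show each pair admits only $O(1)$ fat carrots because they all lie in a constant-area region containing $O(1)$ triangles by packing. Your area-versus-boundary-vertex-count justification of the constant-diameter observation is just a more explicit version of the paper's terse argument that boundary vertices at constant spacing force a long carrot to be non-fat, so the approaches coincide.
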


We next prove that there are $O(n^2)$ carrots in compact fat
meshes. We show that for any
pair of vertices there can be at most constantly many carrots that 
have this pair as their diameter.

Let $p,q$ be two vertices of the mesh, and consider subsets
of triangles that have $pq$ as their diameter and that are a carrot.
The triangles intersecting the diagonal itself is a sequence of
triangles whose union must be convex, otherwise no carrot exists with
$pq$ as the diameter. So this sequence is a carrot called the 
\emph{base carrot} of $pq$. There can be other carrots with $pq$ as the
diameter. These carrots have (sequences of) triangles attached to the
edges of the base carrot in some restricted manner. We will show
that only constantly many triangles can be part of a carrot with
a given base carrot. This was already shown above when the diameter
has constant length, but we can prove this in general.

Let $\ell_p$ be the line through $p$ perpendicular to $pq$ and let
$\ell_q$ be the line through $q$ perpendicular to $pq$. We can assume
that the parallel lines $\ell_p$ and $\ell_q$ are at distance more than
$3$ times the longest edge in the mesh (otherwise we already
know that there are only $O(n)$ carrots with $pq$ as the diagonal).
Then there are at least four edges from $p$ to $q$ on the base carrot,
and vice versa. 

For any outer edge $e=\overline {vw}$ of the base carrot we examine whether it
can have a triangle attached and still be a carrot, and if so, how 
many triangles can be attached to it. 

\begin{lemma}
At most $\lfloor 2\pi/\delta\rfloor$ edges can have a triangle attached.
\end{lemma}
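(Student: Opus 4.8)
The plan is to bound the number of base-carrot edges $e$ that can carry an attached triangle by a turning-angle argument, exactly analogous to the bound on leaves of the skeleton in Section~\ref{sec:car-up}. Fix the base carrot $B$ of $pq$. Its boundary is a closed convex chain; as we traverse it, the exterior turning angles sum to $2\pi$. The key observation is that every outer edge $e=\overline{vw}$ of $B$ that actually admits an attached triangle (such that the result is still a carrot and still has $pq$ as diameter) must consume a bounded-below amount of turning angle in a precise sense. First I would argue that if a triangle $t$ is attached across $e$, its apex $x$ lies outside $B$, and since the union $B\cup t$ must be convex, the interior angles of $B$ at $v$ and at $w$ must each be strictly less than $\pi$ by at least the angles of $t$ at $v$ and $w$ respectively; since $t$ is $\delta$-fat, each of these two angles is at least $\delta$. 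Hence each such edge $e$ "uses up" a turning-angle budget of at least $\delta$ at $v$ and at least $\delta$ at $w$, i.e. at least $2\delta$ spread over its two endpoints.

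The subtlety is that consecutive attachable edges share endpoints, so the naive count would double-count. To handle this cleanly I would assign to each attachable outer edge $e$ the turning angle at one fixed endpoint — say, the endpoint encountered first in a clockwise traversal — and observe that this assignment is injective on attachable edges (no two outer edges of a convex chain share their clockwise-first endpoint). Since each attachable edge is then charged a turning angle of at least $\delta$ at a distinct vertex, and the total turning angle around the convex base carrot is exactly $2\pi$, the number of attachable edges is at most $\lfloor 2\pi/\delta\rfloor$. I would also note that the two edges of $B$ incident to $p$ and the two incident to $q$ are a red herring — the argument applies uniformly to all outer edges — and that we have already reduced to the case where $\ell_p,\ell_q$ are far apart, so $B$ genuinely has many edges and the bound $\lfloor 2\pi/\delta\rfloor$ is the binding constraint.

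The main obstacle I anticipate is making the "consumes turning angle $\delta$" step fully rigorous in the degenerate situation where $B$ already has a straight ($\pi$) angle at $v$ or $w$ (collinear mesh vertices are explicitly allowed): there, attaching $t$ would force the angle past $\pi$, contradicting convexity of $B\cup t$, so in fact such edges are simply \emph{not} attachable, which only helps. One must verify that, after excluding those, the remaining attachable edges each have \emph{both} endpoint-angles strictly below $\pi$ with slack $\ge\delta$, so the charging scheme goes through. A secondary point to check is that an attached triangle's apex cannot coincide with $p$ or $q$ (it would then be interior to the convex hull region near the diameter and the union would fail to be a carrot or fail to have $pq$ as diameter), so the charged angles really are "new" turning not already counted in $B$'s total of $2\pi$. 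Once these edge cases are dispatched, the count $\lfloor 2\pi/\delta\rfloor$ follows immediately.
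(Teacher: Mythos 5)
Correct, and essentially the paper's own argument: the paper also bounds the number of augmentable edges by the $2\pi$ turning-angle budget of the convex base carrot, using Observation~\ref{obs:leaf2delta} to require a cap angle of at least $2\delta$ over any edge that gets a triangle, which is the same fatness-forces-turning-slack idea as your per-endpoint charge of $\delta$ with injective assignment to the clockwise-first endpoint. One small clean-up: what you actually need is not that $B\cup t$ is convex (it need not be, since the carrot may contain further triangles), but that the full carrot is convex and outerplanar, so $v$ stays a boundary vertex and all triangle angles of the carrot at $v$ --- those of $B$ plus the angle of $t$, which is at least $\delta$ --- must fit within an interior angle of at most $\pi$; with that phrasing your charging argument goes through verbatim.
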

\begin {proof}
By Observation~\ref{obs:leaf2delta}, any leaf of a carrot makes an angle of at least $2\delta$.
So, only edges of a base carrot that have a cap angle of at least $2\delta$ can be augmented by more triangles.
\end {proof}

Next we analyze how many triangles can be attached to an external edge $e=\overline {vw}$
of a base carrot with $pq$ as diameter.
Assume that neither $v$ nor $w$ is $p$ or $q$,
and that a clockwise traversal of the boundary of the base carrot encounters 
$p$, $\ldots$, $v$, $w$, $\ldots$, $q$ in this order, see Figure~\ref{fig:basecarrotouteredge}.
The lines through $p$ and $v$ and through $q$ and $w$ together with $e$
bound a triangular region whose area is at most constant; this is true
because $p$ and $q$ are sufficiently far apart with respect to the 
maximum edge length in $M$.
Since all triangles have constant area, at most
constantly many triangles can be attached to $e$ without violating
convexity of the carrot. Next, assume that $p=v$; all other cases are symmetric.
Now we use the line $\ell_p$ and the line through $q$ and $w$ and use the
same argument. We can use the line $\ell_p$ because if any triangle
attached to $e$ goes beyond $\ell_p$, then $pq$ cannot be the diameter.
In both cases, only constantly many triangles can be attached to $e$.

\begin {theorem}
Any compact fat mesh $M$ of size $n$ has at most $O(n^2)$ convex outerplanar polygons that respect $M$.
\end {theorem}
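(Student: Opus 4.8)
The plan is to bound the number of carrots with a fixed diameter pair $p,q$ by a constant, and then multiply by the $O(n^2)$ choices of $(p,q)$. Every carrot has a well-defined diameter which is a segment between two of its vertices; since the carrot is outerplanar, both endpoints are vertices of $M$, so charging each carrot to its diameter pair is well-defined and loses only the $O(n^2)$ factor. It therefore suffices to prove: for any pair $p,q$ of mesh vertices, the number of carrots whose diameter equals $pq$ is $O(1)$ (where the hidden constant may depend on $\delta$ and $\rho$, but not on $n$).

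First I would observe that if the distance between the perpendicular lines $\ell_p,\ell_q$ is at most three times the maximum edge length, then all the relevant triangles lie in a region of constant area, so by the packing lemma (the first lemma of Section~\ref{sec:com}, really just constant area / constant triangle area) there are $O(1)$ triangles available and hence $O(1)$ carrots; this disposes of the short-diameter case. So assume the diameter is long. Then, as noted, the sequence of triangles properly crossing $pq$ must itself be convex (else no carrot with diameter $pq$ exists at all) and forms the \emph{base carrot}. Every other carrot with diameter $pq$ is obtained from the base carrot by attaching, on some of its outer edges, extra triangles — and by convexity these attachments are exactly ear-extensions that stay inside the strip between $\ell_p$ and $\ell_q$ and inside the convex cap cut off by each outer edge.

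Next I would bound the attachments. By Observation~\ref{obs:leaf2delta}, any attached ``ear'' consumes a turning (cap) angle of at least $2\delta$ at the outer edge it sits on; the total turning around the base carrot is $2\pi$, so at most $\lfloor 2\pi/\delta\rfloor$ outer edges can receive any attachment at all (this is the Lemma just before the final theorem). For each such outer edge $e=\overline{vw}$, I would show the region into which triangles may be attached has constant area: it is the triangle bounded by $e$, the line through $p$ and $v$, and the line through $q$ and $w$ (or, in the boundary cases $v=p$ or $w=q$, using $\ell_p$ or $\ell_q$ instead, which is legitimate precisely because going past $\ell_p$ or $\ell_q$ would make some attached vertex farther from $q$ resp.\ $p$ than $p$ is from $q$, contradicting that $pq$ is the diameter). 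Since $p$ and $q$ are far apart relative to the maximum edge length, elementary geometry gives that this region has area bounded by a constant depending only on $\delta$ and $\rho$; since all triangles of a compact fat mesh have area bounded below by a constant, only $O(1)$ triangles fit there. Hence each of the $\leq\lfloor 2\pi/\delta\rfloor$ active outer edges carries $O(1)$ triangles, so there are $O(1)$ triangles total that can appear in any carrot with diameter $pq$, and a carrot is just some subset of these together with the base carrot — giving $2^{O(1)}=O(1)$ carrots with diameter $pq$.

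Combining, the number of carrots is at most (number of vertex pairs) $\times$ $O(1) = O(n^2)$, which is the claimed bound. The main obstacle — the only genuinely geometric point — is the area estimate for the attachment region at a long-diameter outer edge: one must argue carefully that the constraints ``stay convex with the base carrot'' and ``do not exceed distance $|pq|$ from the opposite diameter endpoint'' confine the attachable triangles to a region whose area is bounded independently of $n$, handling the boundary cases $v=p$ and $w=q$ via the lines $\ell_p,\ell_q$. Everything else is bookkeeping: the reduction to a fixed diameter pair, the $O(n^2)$ count of pairs, the turning-angle budget from Observation~\ref{obs:leaf2delta}, and the packing argument from the compactness lemma.
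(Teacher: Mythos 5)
Your proposal is correct and follows essentially the same route as the paper: charge each carrot to its diameter pair, handle short diameters by a constant-area packing argument, and for long diameters bound the attachments to the base carrot's outer edges via the $2\delta$ turning-angle observation and the constant-area triangular regions (using $\ell_p$, $\ell_q$ in the boundary cases). No substantive differences to report.
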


\begin {remark}
The question how many potatoes or fat potatoes can be in a compact fat mesh
is not discussed explicitly, but the lower bound construction of Section~\ref{sec:fat}
can be adapted to produce a compact fat mesh with asymptotically the same number
of vertices. All potatoes we counted are fat.
So there can be $\Omega(n^{\frac12\lfloor\frac{2\pi}\delta\rfloor})$ 
fat potatoes in a compact $\delta$-fat mesh. An upper bound of
$O (n^{\lceil\frac\pi\delta\rceil})$ follows directly from the more
general statement in Theorem~\ref{thm:fat-up}.
\end {remark}

\section {Discussion}

We investigated the maximum number of convex polygons that can be formed using
the edges of a given mesh only.
We provided a construction for a general mesh with $\Omega(1.5028^n)$ 
such convex polygons, and showed that there is an upper bound of $O(1.62^n)$. 
The upper and lower bounds for fat meshes match up to a constant factor (depending on $\delta$) if $\frac\pi\delta$ is an integer. 
If not, they differ by a factor $\sqrt n$ if the remainder is smaller than $\frac12$, or by a factor $n$ if the remainder is at least $\frac12$.

\section* {Acknowledgements}

We thank Stefan Langerman and John Iacono for detecting an error in an earlier
version of this paper.

 M.L. was supported by the Netherlands Organisation for Scientific Research (NWO) under grant 639.021.123.
 J.P. was supported by NSF Grant CCF-08-30272, by NSA, by OTKA under EUROGIGA
project GraDR 10-EuroGIGA-OP-003, and by Swiss National Science Foundation
Grant 200021-125287/1.

\small

\bibliographystyle {abbrv}
\bibliography {copo}

\begin{thebibliography}{10}

\bibitem{abt}
B.~Aronov, M.~de~Berg, and S.~Thite.
\newblock The complexity of bisectors and voronoi diagrams on realistic
  terrains.
\newblock In {\em Proc. 16th ESA}, volume 5193 of {\em LNCS}. Springer, 2008.

\bibitem{akls}
B.~Aronov, M.~van Kreveld, M.~L{\"o}ffler, and R.~I. Silveira.
\newblock Peeling meshed potatoes.
\newblock {\em Algorithmica}, 60(2):349--367, 2011.

\bibitem{beg}
M.~Bern, D.~Eppstein, and J.~Gilbert.
\newblock Provably good mesh generation.
\newblock {\em J. Comput. Syst. Sci.}, 48(3):384--409, 1994.

\bibitem{cy-psppp-86}
J.~S. Chang and C.~K. Yap.
\newblock A polynomial solution for the potato-peeling problem.
\newblock {\em Discrete Comput. Geom.}, 1:155--182, 1986.

\bibitem{bchlt}
M.~de~Berg, O.~Cheong, H.~J. Haverkort, J.~G. Lim, and L.~Toma.
\newblock The complexity of flow on fat terrains and its i/o-efficient
  computation.
\newblock {\em Comput. Geom.}, 43(4):331--356, 2010.

\bibitem{bsvk}
M.~de~Berg, A.~F. van~der Stappen, J.~Vleugels, and M.~J. Katz.
\newblock Realistic input models for geometric algorithms.
\newblock {\em Algorithmica}, 34(1):81--97, 2002.

\bibitem{g-lcpcn-81}
J.~E. Goodman.
\newblock On the largest convex polygon contained in a non-convex $n$-gon or
  how to peel a potato.
\newblock {\em Geom. Dedicata}, 11:99--106, 1981.

\bibitem{mpssw}
J.~Matousek, J.~Pach, M.~Sharir, S.~Sifrony, and E.~Welzl.
\newblock Fat triangles determine linearly many holes.
\newblock {\em SIAM J. Comput.}, 23(1):154--169, 1994.

\bibitem{mks}
E.~Moet, M.~van Kreveld, and A.~F. van~der Stappen.
\newblock On realistic terrains.
\newblock {\em Comput. Geom.}, 41(1-2):48--67, 2008.

\bibitem{ps}
J.~Pach and M.~Sharir.
\newblock {\em Combinatorial Geometry and Its Algorithmic Applications: The
  Alcala Lectures}.
\newblock Mathematical Surveys and Monographs. AMS, 2009.

\bibitem{sobv}
A.~F. van~der Stappen, M.~H. Overmars, M.~de~Berg, and J.~Vleugels.
\newblock Motion planning in environments with low obstacle density.
\newblock {\em Discrete {\&} Computational Geometry}, 20(4):561--587, 1998.

\end{thebibliography}

\end{document}